\documentclass[letterpaper,12pt]{article}

\usepackage{amsmath}
\usepackage{amsthm}
\usepackage{amssymb}
\usepackage[]{graphicx}
\usepackage{fancyhdr}
\usepackage{wrapfig}
\usepackage{fullpage}
\usepackage{comment}
\usepackage{natbib}
\usepackage[dvipsnames]{xcolor}
\usepackage{mathrsfs}
\usepackage[multiple]{footmisc}
\usepackage{subcaption}
\usepackage{array}
\usepackage{multirow}
\usepackage{tabularx}
\usepackage{xurl}
\usepackage[hidelinks]{hyperref}
\usepackage{booktabs}
\usepackage{graphicx}
\usepackage{float}

\usepackage{tikz}
\usetikzlibrary{decorations.pathreplacing, arrows.meta}
\tikzset{vert/.style = {circle, fill, inner sep = 0, minimum size = 5}}
\newcommand{\circled}[2][inner sep=1pt]{\ifmmode
\tikz[baseline=(X.base),outer sep=0pt]{\node[circle,draw,#1](X){\ensuremath{#2}};}
\else
\tikz[baseline=(X.base),outer sep=0pt]{\node[circle,draw,#1](X){#2};}
\fi
}

\newenvironment{tabnotes}[2][1]{\begin{minipage}[t]{#1\textwidth}\vspace{0.1cm}\scriptsize{\emph{Notes:} #2}}{\end{minipage}}

\newtheorem{thm}{Theorem}
\newtheorem{prop}{Proposition}
\newtheorem{lm}{Lemma}

\theoremstyle{definition}

\newtheorem{hypothesis}{Hypothesis}

\title{Priority Transparency, Admission Chances, and Information Acquisition in School Choice}
\author{Georgy Artemov\footnote{Department of Economics, University of Melbourne, VIC 3010, Australia, gartemov@unimelb.edu.au.} \and Siqi Pan\footnote{Department of Economics, University of Melbourne, VIC 3010, Australia, siqi.pan@unimelb.edu.au.}}
\date{\vspace{-3ex}}

\usepackage{version}
\usepackage[normalem]{ulem}

\begin{document}
\maketitle

\begin{abstract}

We study, theoretically and experimentally, how transparency about students' priorities and admission chances shapes their incentives to acquire information about their own preferences in school choice and college admissions. In the model, uninformed students choose schools based on a common prior. When they learn their own preferences, their choices become more heterogeneous, which frees up seats at popular schools. Students who know they have high priority have stronger incentives to learn because they can more readily act on what they learn, whereas students who know they have low priority are discouraged. Full priority disclosure concentrates learning among high-priority students. By pooling priorities, partial disclosure spreads learning incentives to pooled students and yields higher welfare. In the laboratory, however, full disclosure yields the highest welfare instead, followed by partial disclosure, and then no disclosure, because greater transparency improves subjects' understanding of the strategic environment, leading to fewer mistakes. These findings support full disclosure of priorities or admission chances to guide information acquisition. However, deviations in learning remain even under greater priority transparency, partly because subjects respond suboptimally to admission chances when these are provided directly rather than inferred. Students' ability to interpret and use them is therefore itself a policy concern.

%We study, theoretically and experimentally, how priority transparency and admission chances shape students’ incentives to acquire information about their own preferences in school choice. In the model, learning one’s preferences reduces congestion at popular schools. Under full disclosure, students who know they have low priority are discouraged from learning, whereas partial disclosure pools priorities and may preserve lower-priority students’ incentives, yielding higher welfare. In the laboratory, however, full disclosure performs best because it improves subjects’ understanding of the strategic environment and reduces mistakes. Deviations remain even when admission chances are stated rather than inferred, making students’ use of such information a policy concern.

\end{abstract}

{\small
\noindent {\bf Keywords:} Matching, school choice, information acquisition, transparency, search, experiment

\noindent {\bf JEL Classification:}  C92, D47\\[0.5em]}

\section{Introduction}

Market design has seen remarkable success in organizing real-world allocation processes, including centralized assignment systems for school choice and college admissions. Yet as these mechanisms have moved from theory into practice, new considerations have emerged. Among them are participants' demands for greater transparency about how outcomes are determined and their chances of receiving different assignments, as well as their imperfect information about their own preferences and the costly learning required to refine it. This paper explores how transparency shapes learning. We show that limiting transparency can improve welfare in theory, but make the resulting incentives harder for participants to respond to in practice.

An important dimension of transparency is how precisely students know their admission priorities and how readily they can translate them into admission chances. Access to such knowledge varies widely across the world and is actively evolving within many systems. In New York City, parents successfully campaigned for the release of lottery numbers used to break ties in admission priorities; the Department of Education now publishes these numbers alongside a high/medium/low admission-chance indicator for each program.\footnote{New York City's middle- and high-school choice systems typically use a single lottery to break ties within priority groups and, at unscreened schools, to determine priorities entirely. Lottery numbers were historically withheld. Following demands from PLACE NYC and several elected parent councils, the Department of Education began displaying them in MySchools in February 2022. See \cite{marian2023algorithmic} for details.} Similar moves toward greater transparency have occurred in systems with exam-based priorities. China, for example, shifted from requiring students to submit university rank-order lists before learning their exam results to releasing exam results before submission; Queensland, Australia, moved from reporting exam outcomes in broad bands to reporting precise ranks; and Chile introduced a tool that estimates applicants' admission chances.\footnote{By 2017, virtually all provinces in China had completed the transition to rank-order list submission after the release of exam results \citep{lien2016preference,lien2017ex,pan2019instability}. Until 2019, Queensland reported tertiary-entrance standing as an ``Overall Position'' (OP), a coarse 1--25 band; from 2020, it adopted the Australian Tertiary Admission Rank (ATAR), reported on a 0--99.95 scale in 0.05 increments. For details about Chile's online admission simulator, see \url{https://demre.cl/noticias/2016-11-17-simulador-postulaciones}.} 

Even when centralized systems leave participants to infer their own chances, third-party tools may emerge to meet that demand. In Indian medical, dental, and engineering admissions, applicants know their exam results but cannot readily infer which colleges those results are likely to secure. An industry of ``college predictors'' combines previous years' admission cutoffs with current-year seat capacities to estimate applicants' chances across their options.\footnote{See \cite{baswana2019}. Examples of ``college predictors'' include \url{https://allen.in/neet/college-predictor}, \url{https://collegedunia.com/neet-college-predictor}, and \url{https://engineering.careers360.com/jee-college-predictor}.}

Sometimes, reforms driven by other objectives can inadvertently reduce transparency. When school districts---including Toronto's specialized programs---replaced selective admissions with lotteries to promote equity, they also made applicants' chances less predictable, as lottery numbers were not disclosed.\footnote{In 2022, the Toronto District School Board replaced auditions, portfolios, and report-card assessments for specialized arts, athletics, and STEM programs with a lottery, effective September 2023. See \nolinkurl{https://www.cbc.ca/news/canada/toronto/toronto-district-school-board-specialized-schools-programs-admissions-1.6466071}.} Importantly, a shift to lotteries need not reduce predictability---as the NYC example shows, disclosing lottery numbers can restore it.

In many of these examples, public discussions connect transparency about admission chances to students' decisions about how much to invest in learning about their options. Learning which schools are the best fit can require substantial time and resources: parents and students may need to visit schools, attend information sessions, talk to current students, or investigate course offerings. In New York City, one argument for releasing lottery numbers was that they would help families plan their school search \citep{marian2023algorithmic}. Similarly, in India and Chile, counseling services and universities advise students to take examination results into account when researching their options.%
\footnote{For India, see \url{https://www.doctutorials.com/mbbs-curriculum/neet-ug-counselling-college-selection} and \url{https://www.pw.live/neet/exams/mistakes-to-avoid-while-applying-for-the-neet-ug-counselling}; for Chile, see \url{https://portaleduca.cl/entrega-resultados-paes-formando-chile-entrega-5-recomendaciones-para-postular-sin-errores/} and \url{https://uchile.cl/noticias/140179/antes-de-postular-a-la-universidad-que-hay-que-considerar}.}

These discussions reflect a basic economic intuition: the value of acquiring information depends on students' priorities and admission chances. A student with higher priority, and therefore higher admission chances, is more likely to be able to act on what she learns and has a stronger incentive to do so; a student with lower priority has weaker incentives for the same reason. Despite the importance of this link, its policy implications have, to our knowledge, received little attention. This paper formalizes the link and studies how priority transparency shapes students' incentives to learn about their own preferences. Our results show that failing to account for it can carry a real cost.

We develop a school choice model with costly information acquisition. Students share a common prior over schools, capturing public rankings or widely shared perceptions of school quality. By paying a cost, they can learn about their idiosyncratic preferences, reflecting student-school fit. When students rely mainly on common rankings, demand becomes concentrated on the same popular schools. Learning can make choices more heterogeneous: some students may discover that a less popular school is a better fit.\footnote{\citet{Narita(2018)} studies New York City high-school choice and finds that, after search, students revise their choices to become more responsive to school characteristics related to their personal circumstances, including distance, language, and current students.} This can reduce congestion and increase other students' admission chances. Learning therefore generates a positive externality that students do not fully internalize.

Such a model captures a concern prominent among education experts: students rely on aggregate rankings rather than researching their options. Richard Beeman (\citeyear{beeman_learning_beyond_measure_2002}), dean of undergraduate education at the University of Pennsylvania, argued that ``the very idea that\dots rankings can be useful to students, is highly problematic,'' while Frank Bruni (\citeyear{bruni_college_rankings_2023}), a Duke professor and author of a NYT bestseller on college admissions, observed that students ``couldn't really tell me why they wanted Duke---they just knew that they should want it [because it] was highly ranked.'' This pattern is exactly what happens in our model at the individual level, but we take this further to explore how aggregate welfare is affected by congestion-reducing externalities and transparency about admission chances.

More precise knowledge of admission chances affects welfare not only through how many students learn, but also through who learns. Students with strong admission chances are more likely to benefit from learning about their idiosyncratic fit and thus have greater incentives. Those with low admission chances, by contrast, may be discouraged from learning altogether, even if doing so would have been valuable to them. Our theoretical results show that, although precise knowledge is privately valuable, full transparency may not be welfare-maximizing. Coarser disclosure can pool students with different priorities and preserve stronger learning incentives among students who would otherwise stop learning.

While theory suggests that the optimal information structure can be coarser than full disclosure, a coarser structure may also impose greater cognitive demands on participants. Its practical performance therefore depends on whether participants understand it and respond correctly. Recent work in market and mechanism design shows that agents may fail to respond correctly even when their incentives are theoretically simple in theory but hard to recognize in the mechanism's description.\footnote{Deviations from dominant strategies in strategy-proof mechanisms are well documented \citep{hassidim2017mechanism,hassidim2017redesigning,artemov2017strategic,reesjones2018suboptimal,reesjones2018experimental,hassidim2021limits,shorrer2024dominated}. One explanation is that participants may not recognize the incentives created by a mechanism or its standard description. \citet{li2017obviously} formalizes this idea through obvious strategy-proofness, and \citet{li2024designing} surveys related approaches to mechanism simplicity. Recent experiments by \citet{katuscak2024strategy} and \citet{gonczarowski2024describing} study whether descriptions that make strategy-proofness more apparent improve understanding and truthful reporting. For surveys of behavioral responses in market-design experiments, see \citet{hakimov2020experiments} and \citet{pan2020experiments}.} We therefore conduct a laboratory experiment comparing policies that provide students with no, partial, or full knowledge of their priorities.

Our experimental results reveal a discrepancy with the theoretical prediction: more precise knowledge about priorities tends to yield higher social welfare. Specifically, full knowledge dominates partial knowledge, which in turn outperforms the no-knowledge policy. Individual-level analysis suggests that the reason is behavioral. Subjects in the no-knowledge treatment have greater difficulty understanding the strategic environment and mapping others’ learning decisions into their own admission chances. Greater transparency reduces this complexity and leads to fewer strategic mistakes. However, difficulties in learning decisions are not purely strategic: even with greater transparency, deviations from theoretically optimal learning decisions remain and are partly explained by suboptimal responses to directly provided admission chances.

Our results provide a clear policy implication: full priority transparency should be provided early enough to guide students' information acquisition, subject to constraints on the admissions process. Indeed, our experimental findings may understate the value of transparency in real admission systems. In our environment, students either learn their entire preference realization or do not learn at all, and the mechanism does not require them to engage in complex strategizing when submitting rank-order lists. Yet even here, full disclosure outperforms the theoretically superior partial disclosure, in terms of welfare, by reducing behavioral mistakes. In practice, transparency may also help students target their search toward relevant schools and submit better rank-order lists in non-strategy-proof mechanisms, such as immediate acceptance or deferred acceptance with constrained lists. These additional channels lie outside our model and suggest that the value of priority transparency may exceed what our experimental environment is designed to capture.

Lastly, our experiment shows that directly providing admission chances does not by itself ensure optimal information-acquisition decisions. This is an important policy concern. When New York City Public Schools introduced a prediction tool in 2024 that classified each applicant's chance of receiving an offer from a high-school program as high, medium, or low, families welcomed the effort to provide more guidance, yet some raised concerns that these coarse categories are confusing and may undermine trust in the predictions.\footnote{See \url{https://medium.com/algorithms-in-the-wild/nyc-high-school-chances-of-admission-predictions-cb15fd4b5655}.} Our findings suggest that such concerns are justified and that authorities need to think carefully about how admission-chance information is presented, whether students understand it, and what decision support is needed to help them act on it.

\subsection{Related Literature}

Recent work on information acquisition in matching markets relaxes the standard assumption that participants know their own preferences and can rank all available options at no cost. Instead, it recognizes that participants in many real-world matching markets need to engage in costly search to learn about their own preferences.\footnote{Empirical evidence shows that students actively search and learn about their preferences in school choice and college admissions. \citet{Narita(2018)} finds that a substantial share of students in New York City revise their choices after acquiring additional information about schools. \citet{grenet2022preference} find that university programs from which students receive earlier offers are more likely to be ranked higher, consistent with students engaging in costly search over their options.}

Within this literature, the strand most closely related to our paper studies the externalities and social optimality of information acquisition in matching environments. In these models, agents share a common prior over options, which generates congestion at popular schools; information acquisition can relieve this congestion by making choices more heterogeneous and thereby generate positive externalities. \citet{artemov2021assignment} shows that privately optimal information acquisition may fall below the social optimum because students do not internalize how their search affects other students. He argues that fully disclosing priorities would improve welfare---a policy that, in our model, may not be optimal. \citet{maxey2024school} compares simultaneous and sequential search protocols under different degrees of preference correlation and identifies a policy instrument through which the designer can coordinate search and improve welfare. In the continuum model of \citet{maxey2024school}, there is no aggregate uncertainty, so every student perfectly predicts her assignment. As a result, there are no marginal students whose admission chances depend on cutoff realization, leaving no room for intermediate disclosure policies. Our paper extends this strand by allowing for partial disclosure and bringing the theory to the laboratory. We show that behavioral responses to an information structure that is theoretically optimal but more complex than full disclosure can overturn the model's welfare ranking.

While the papers above study information acquisition within a fixed matching mechanism, another strand studies how the choice of matching mechanism shapes participants' incentives to acquire information. \citet{bade2015} introduces information acquisition into a house-allocation problem in which agents must incur a cost to learn their own preferences. She shows that serial dictatorship provides correct learning incentives and is the unique ex-ante Pareto-optimal, strategy-proof, and nonbossy mechanism. \citet{harless2018learning} also study strategic learning in object allocation and show that top trading cycles can dominate serial dictatorship under inequality-averse welfare criteria. In school choice, \citet{chen2022information} show that only immediate acceptance provides incentives for students to learn both their own cardinal preferences and the preferences of other participants, and that immediate acceptance and deferred acceptance cannot be unambiguously ranked in terms of welfare. \citet{chen2021information} test these predictions experimentally and find that students are willing to pay significantly more for information under immediate acceptance than under deferred acceptance, with an overall tendency toward over-searching across treatments. Unlike these papers, we hold the matching mechanism fixed and study how priority transparency affects students' incentives to acquire information.

Another related strand studies interventions that provide direct information about admission chances or feasible options, focusing on how such information affects students' decisions about which schools to search. \citet{arteaga2022smart} study personalized feedback about admission chances in centralized school choice and show that such feedback can improve search and application decisions. \citet{immorlicaetal_2018} show that students benefit from learning their feasible sets before acquiring costly information about universities, but that this can generate information deadlocks when students delay search while waiting for others' decisions. \citet{hakimov2023costly} compare direct and sequential serial dictatorship and study the provision of cutoff information. They find that sequential serial dictatorship and cutoff provision can improve welfare by directing students' information acquisition toward schools that remain feasible, and that the sequential mechanism reduces deviations from optimal search behavior. \citet{noda2022strategic} studies the optimal disclosure of feasible sets under random serial dictatorship and shows that full disclosure can be Pareto inefficient because of positive externalities in information acquisition. Unlike these papers, our focus is not on how direct information about admission chances or feasible options affects which schools students search, but on how disclosed priorities shape students' beliefs about their admission chances, and how those beliefs affect overall learning incentives and social welfare. \footnote{Another strand of the literature documents that applicants facing lower admission chances are more likely to submit dominated or otherwise nontruthful preference reports \citep{hassidim2017mechanism,hassidim2017redesigning,artemov2017strategic,reesjones2018suboptimal,reesjones2018experimental,hassidim2021limits,shorrer2024dominated}. By contrast, in our model, choosing not to acquire information is a rational response to one's admission chances rather than a mistake in preference reporting.}

Whereas our paper studies how students' knowledge of their priorities or admission chances affects information acquisition, other work focuses on the preference-submission stage. In settings where priorities are determined by entrance-exam performance, \citet{lien2016preference} and \citet{lien2017ex} study whether exam outcomes should be disclosed before or after students submit rank-order lists. Under unbiased beliefs about exam performance, they show that withholding exam outcomes until after preference submission can generate ex-ante fair and efficient outcomes under immediate acceptance. \citet{pan2019instability} experimentally studies this question when students form beliefs about their performance in a real-effort task. Biased self-assessments weaken the case for delayed disclosure under immediate acceptance, whereas disclosure before preference submission under deferred acceptance improves stability and yields a more equitable distribution of welfare. \citet{huang2025transparent} study whether lottery outcomes used to break priority ties should be revealed before students submit rank-order lists, showing that disclosure can improve strategic reporting and welfare when lists are constrained. \citet{haeringer2026information} examine how different framings of admission-chance information affect preference submission and find large and heterogeneous framing effects.

Priority transparency is part of a broader set of questions about transparency, verifiability, and auditability in matching mechanisms. \citet{hakimov2025transparency} study how feedback, such as cutoffs or information generated by single-school elicitation, can allow students to verify their assignments and the implementation of the announced admissions procedure without disclosing other students' preferences or priorities. \citet{moller2026transparent} studies the trade-offs among transparency, strategy-proofness, stability, and efficiency in standard one-to-one matching markets with limited commitment.
\citet{grigoryan2024auditability} compare school-choice rules according to how much information is required to detect deviations from their prescribed implementation. In these papers, transparency concerns whether an allocation rule can be credibly implemented or audited. Here, by contrast, transparency refers to the disclosure of a student's own priority and admission chances.

\section{Experimental Setup and Theoretical Analysis}
\label{sec:theory}

In this section, we introduce the environment and derive theoretical predictions. While we focus on the experimental setup here, in \ref{sec:general_model}, we replicate key insights in a continuum model.

\subsection{Setup}
In each experimental market, there are three students and two schools, A and B. School A has one available seat, while School B has enough seats for all students. Each student is assigned to one of the two schools. A student admitted to School A receives a payment of 30 AUD. A student admitted to School B receives a payment of either 40 AUD or 10 AUD, with equal probability, and independently across students. Experimental subjects act as students; schools are not strategic, and their actions are simulated by the computer.

Students first decide whether to acquire information about their own preferences, that is, whether to learn their realized payoff from School B: 40 AUD or 10 AUD. Acquiring this information incurs a cost and allows the student to observe the realized School-B payoff before the admission procedure. 

Once students have made their learning decisions, they are assigned to schools through a sequential Random Serial Dictatorship (RSD) mechanism.\footnote{Because students have the same priority ordering at both schools, the Deferred Acceptance and Top Trading Cycles mechanisms reduce to the Serial Dictatorship mechanism. Our paper primarily focuses on learning decisions rather than the decisions to choose between schools or submit preferences. Therefore, we intentionally choose a sequential mechanism to simplify subjects' decisions in the admission procedure after information acquisition. Several studies have demonstrated the behavioral advantages of sequential over static matching mechanisms (see, for example, \citealp{li2017obviously}; \citealp{Pycia2019}; \citealp{klijn2019static}; \citealp{bo2020iterative}). } Students' priorities in RSD are randomly determined, with each student having an equal chance of being ranked first (``Student 1''), second (``Student 2''), or third (``Student 3''). As described in the next section, depending on the treatment, students may have no, partial, or full knowledge about their priorities before deciding whether to acquire information. 

In the admission procedure, students make school decisions one by one in order of their priorities:
\begin{description}
\item [{Step 1:}] Student 1 chooses between Schools A and B;

\item [{Step 2:}] If School A is still available, Student 2 chooses between A and B. Otherwise, she is automatically admitted to B.

\item [{Step 3:}] If School A is still available, Student 3 chooses between A and B. Otherwise, she is automatically admitted to B.
\end{description}
A student's final payoff equals the payoff from the school to which she is admitted, minus the learning cost if she chooses to learn.

\subsection{Treatments}
The treatments differ in what students know about their priorities before deciding whether to acquire information. In Treatment \textit{Unknown}, students do not know their priorities when making their learning decisions. Their priorities are revealed only after the information-acquisition stage and before the admission procedure. We refer to all students in this treatment as ``Student 1/2/3.'' In Treatment \textit{Known}, students know their exact priorities when deciding whether to learn about School B and are referred to directly by those priorities. In Treatment \textit{Coarse}, students are told only whether they are among the top two or have the lowest priority when making their learning decisions. Students in the top-two group have the first or second priority with equal probability, whereas the remaining student always has the third priority.\footnote{For the cost combinations used in our main experiment, Experiment 2, this is the optimal disclosure policy because it implements the socially optimal learning pattern: Students 1 and 2 learn, while Student 3 does not. For other cost combinations, more complex disclosure policies may be optimal. For example, when Student 1's cost is above 3.75 and Student 2's cost is below 3.75, it may be optimal to pool Student 1 with only some Student 2 types.} The top-two students are labeled ``Student 1/2,'' and the student with the third priority is labeled ``Student 3.''

\subsection{Theoretical Analysis}
\label{sec:theoretical predictions}

We next conduct a theoretical analysis and derive equilibrium predictions. Although multiple equilibria arise in many settings, our analysis focuses on the maximum-information equilibrium because it delivers the highest total welfare. This is the equilibrium on which a designer would attempt to coordinate students, making it the most meaningful basis for welfare comparisons across treatments.\footnote{In other equilibria of \textit{Unknown} and \textit{Coarse}, students anticipate that others do not learn and do not learn themselves.} Equilibrium selection, though, should not be a concern: as discussed below, our experimental design allows us to fix students' beliefs about others' learning strategies and examine whether the observed welfare patterns are driven by equilibrium selection.

Under the payoff structure described above, School B has an expected payoff of 25, which is less than School A's guaranteed payoff of 30. Without knowing B's realized payoff, every risk-averse or risk-neutral student prefers School A. After learning, a student prefers A to B or B to A with equal probability. As more students acquire information, population preferences become more heterogeneous. Homogeneous preferences, driven by common prior beliefs, lead to congestion at School A; preference heterogeneity alleviates this congestion. Information acquisition therefore generates a positive externality by increasing other students' admission chances at School A, which in turn enhances their incentives to acquire information.

To illustrate this intuition, consider students' learning benefit under \textit{Known}, presented in Table \ref{tab:learning incentives}.\footnote{We assume risk neutrality in the model. As discussed in the experimental results, risk aversion may reduce subjects' incentives to acquire information, but it does not change the ordering of learning incentives across treatments or priority positions.} Student 1, with the highest priority, can always choose between Schools A and B and therefore has an admission chance of 100\% at School A. Her benefit from learning is 5, because learning increases her payoff by 10 with probability 50\%.
Student 2's learning benefit depends on Student 1's learning decision. If Student 1 does not learn, she chooses School A, leaving Students 2 and 3 with a 0\% chance of admission to School A. Not being able to choose between the two schools, neither Student 2 nor Student 3 has any reason to learn. If Student 1 learns, however, there is a 50\% chance that she discovers a preference for School B and chooses it instead. Student 2 then has a 50\% admission chance and a learning benefit of 2.5. If Student 2 does not learn, Student 3 has no incentive to learn. If Student 2 learns, however, Student 3 has a 25\% admission chance and a learning benefit of 1.25.

The analysis above shows that a student's admission chance is affected only by the learning decisions of students with higher priority. For a given student, we say another student is \textit{relevant} if there exists a priority realization in which that other student has higher priority. Under \textit{Unknown}, every student is relevant to every other student. Under \textit{Known}, no student is relevant to Student 1; Student 1 is relevant to Student 2; and Students 1 and 2 are relevant to Student 3. Under \textit{Coarse}, the other Student 1/2 is relevant to a Student 1/2, while both Students 1/2 are relevant to Student 3.

\begin{table}[htbp]
\centering
\caption{Learning Benefits and Admission Chances}
\label{tab:learning incentives}
\begin{tabular}{m{2.2cm}|m{2cm}|m{2.2cm}|m{2.2cm}|m{5.6cm}}
\hline
\textbf{Treatment} & \textbf{Student} & \textbf{Learning benefit} & \textbf{Admission chance} & \textbf{Conditions}\\
\hline
\multirow{3}{*}{\textit{Unknown}} & S1/2/3 & 1.67 & 33.3\% & No other students learn\\
& S1/2/3 & 2.08 & 41.7\% & One other student learns\\
& S1/2/3 & 2.92 & 58.3\% & Both other students learn\\
\hline
\multirow{5}{*}{\textit{Known}} & S1 & 5 & 100\% & None\\
 & S2 & 2.5 & 50\% & S1 learns \\
 & S2 & 0 & 0\% & S1 does not learn\\
 & S3 & 1.25 & 25\% & S1 and S2 both learn\\
 & S3 & 0 & 0\% & S1 and/or S2 does not learn\\
\hline
\multirow{5}{*}{\textit{Coarse}} & S1/2 & 3.75 & 75\% & Other S1/2 learns\\
& S1/2 & 2.5 & 50\% & Other S1/2 does not learn\\
& S3 & 1.25 & 25\% & Both S1/2 learn\\
& S3 & 0 & 0\% & At least one S1/2 does not learn\\
\hline
\end{tabular}
\begin{tabnotes}
    The table reports learning benefits---the payoff gain from acquiring information, excluding the learning cost---and admission chances across treatments and student priorities. These quantities depend on other students' learning decisions, as specified in the ``Conditions'' column. ``S'' followed by a number refers to a student with the corresponding priority.
\end{tabnotes}
\end{table}

Information acquisition exhibits strategic complementarity: a student's incentive to acquire information increases when she expects more students relevant to her to do so. Likewise, when students to whom she is relevant expect her to acquire information, their incentives to acquire information also increase. The resulting equilibrium chain reaction resembles a domino effect moving down the priority order. This helps explain why knowledge about priorities may enhance welfare. Students with higher priority, once aware of their stronger admission chances, are more likely to acquire information. In equilibrium, this raises lower-priority students' expected admission chances and can induce further information acquisition down the priority order.

At the same time, students with lower priority, once informed of their weaker admission chances, may choose not to acquire information. The designer therefore faces a trade-off: more precise knowledge about priorities can initiate the domino effect by encouraging higher-priority students to learn, but it can also weaken lower-priority students' incentives and halt the sequence prematurely. 

\begin{figure}[htbp]
\centering
\begin{tikzpicture}[scale=0.65]

    \def\xSc{4}
    \def\ySc{2}
    \def\yBase{80}      % payoff 80 is plotted at y=0
    \def\yCompress{2.7}   % vertical compression
    \def\tickSz{0.2}
    \def\xMin{0.75}
    \def\xMax{5.25}

    % Vertical coordinate of the x-axis, which represents payoff 79
    \pgfmathsetmacro{\xAxisY}{(79-\yBase)/\yCompress*\ySc}

    % Vertical coordinate of the lowest graph value, payoff 80
    \pgfmathsetmacro{\bWlf}{(80-\yBase)/\yCompress*\ySc}

    % x-axis
    \draw[->]
        (\xMin*\xSc,\xAxisY)
        --
        (\xMax*\xSc,\xAxisY);

    % x-axis ticks: correctly placed on the x-axis
    \foreach \i in {1,...,5}
        \draw
            (\i*\xSc,\xAxisY)
            --
            (\i*\xSc,{\xAxisY-\tickSz})
            node[below,font=\footnotesize]{\i};

    % x-axis label
    \node[below,font=\footnotesize]
        at (3*\xSc,{\xAxisY-0.8})
        {Cost};

    % y-axis ticks: 79 to 91
    \foreach \i in {79,...,91}
        \draw
            (\xMin*\xSc,{(\i-\yBase)/\yCompress*\ySc})
            --
            ({\xMin*\xSc-\tickSz},
             {(\i-\yBase)/\yCompress*\ySc})
            node[left,font=\footnotesize]{\i};

    % y-axis
    \draw[->]
        (\xMin*\xSc,\xAxisY)
        --
        (\xMin*\xSc,{(91.5-\yBase)/\yCompress*\ySc});

    % y-axis label
    \node[rotate=90,font=\footnotesize]
        at ({\xMin*\xSc-1.6},
            {(85-\yBase)/\yCompress*\ySc})
        {Total Payoff};

    %--------------------------------------------------
    % Unknown
    %--------------------------------------------------

    % Payoff = 92.5 - 3c
    \draw[ForestGreen, line width=1.2pt]
        (\xMin*\xSc,
         {((92.5-3*\xMin)-\yBase)/\yCompress*\ySc})
        --
        (2.92*\xSc,
         {((92.5-3*2.92)-\yBase)/\yCompress*\ySc});

    % Payoff = 80
    \draw[ForestGreen, line width=1.2pt]
        (2.92*\xSc,\bWlf)
        --
        (\xMax*\xSc,\bWlf);

    %--------------------------------------------------
    % Known
    %--------------------------------------------------

    % Payoff = 92.5 - 3c
    \draw[Maroon, line width=1.2pt]
        (\xMin*\xSc,
         {((92.5-3*\xMin)-\yBase)/\yCompress*\ySc+0.1})
        --
        (1.25*\xSc,
         {((92.5-3*1.25)-\yBase)/\yCompress*\ySc+0.1});

    % Payoff = 91.25 - 2c
    \draw[Maroon, line width=1.2pt]
        (1.25*\xSc,
         {((91.25-2*1.25)-\yBase)/\yCompress*\ySc+0.1})
        --
        (2.5*\xSc,
         {((91.25-2*2.5)-\yBase)/\yCompress*\ySc+0.1});

    % Payoff = 87.5 - c
    \draw[Maroon, line width=1.2pt]
        (2.5*\xSc,
         {((87.5-2.5)-\yBase)/\yCompress*\ySc+0.1})
        --
        (5*\xSc,
         {((87.5-5)-\yBase)/\yCompress*\ySc+0.1});

    % Payoff = 80, shifted slightly for visibility
    \draw[Maroon, line width=1.2pt]
        (5*\xSc,{\bWlf+0.1})
        --
        (\xMax*\xSc,{\bWlf+0.1});

    %--------------------------------------------------
    % Coarse
    %--------------------------------------------------

    % Payoff = 92.5 - 3c
    \draw[NavyBlue, line width=1.2pt]
        (\xMin*\xSc,
         {((92.5-3*\xMin)-\yBase)/\yCompress*\ySc+0.2})
        --
        (1.25*\xSc,
         {((92.5-3*1.25)-\yBase)/\yCompress*\ySc+0.2});

    % Payoff = 91.25 - 2c
    \draw[NavyBlue, line width=1.2pt]
        (1.25*\xSc,
         {((91.25-2*1.25)-\yBase)/\yCompress*\ySc+0.2})
        --
        (3.75*\xSc,
         {((91.25-2*3.75)-\yBase)/\yCompress*\ySc+0.2});

    % Payoff = 80, shifted slightly for visibility
    \draw[NavyBlue, line width=1.2pt]
        (3.75*\xSc,{\bWlf+0.2})
        --
        (\xMax*\xSc,{\bWlf+0.2});

    %--------------------------------------------------
    % Dotted cutoff lines
    %--------------------------------------------------

    \foreach \i in {1.25,2.5,2.92,3.75,5}
        \draw[dotted,->]
            (\i*\xSc,{(91-\yBase)/\yCompress*\ySc})
            node[above,font=\footnotesize]{\i}
            --
            (\i*\xSc,\xAxisY);

    %--------------------------------------------------
    % Legend
    %--------------------------------------------------

    \draw[ForestGreen, line width=1.2pt]
        (1.0*\xSc,{\xAxisY-2.5})
        --
        (1.5*\xSc,{\xAxisY-2.5})
        node[right,font=\footnotesize]{Unknown};

    \draw[Maroon, line width=1.2pt]
        (2.6*\xSc,{\xAxisY-2.5})
        --
        (3.1*\xSc,{\xAxisY-2.5})
        node[right,font=\footnotesize]{Known};

    \draw[NavyBlue, line width=1.2pt]
        (4.0*\xSc,{\xAxisY-2.5})
        --
        (4.5*\xSc,{\xAxisY-2.5})
        node[right,font=\footnotesize]{Coarse};

\end{tikzpicture}

\caption{Theoretical Prediction: Total Student Payoff under Symmetric Costs}
\label{fig:Welfare_Exp 1_th_sym}

\begin{tabnotes}
The figure shows how equilibrium total payoffs compare across the three treatments when learning costs are symmetric. Whenever multiple equilibria exist, the maximum-information, and therefore highest-payoff, equilibrium is selected. The dotted vertical lines indicate points at which some students change their learning decisions in one of the treatments. In equilibrium, all students acquire information in all treatments when the cost is below 1.25, and no students acquire information in any treatment when the cost is above 5. To ensure all treatments are visible, the lines have been slightly offset vertically; in regions where the lines run parallel, the theoretical payoffs are mathematically identical.
\end{tabnotes}
\end{figure}

This trade-off is apparent in our experimental setting. Under \textit{Coarse}, two of the three students (Students 1/2) know only that they are Student 1 or Student 2 with equal probability. They therefore evaluate their learning benefits by averaging across these two contingencies. If the other Student 1/2 acquires information, the learning benefit of a Student 1/2 is $(5+2.5)/2=3.75$, which exceeds Student 2's learning benefit of 2.5 under \textit{Known}. Hence, for some information costs, both higher-priority students acquire information under \textit{Coarse}, whereas only Student 1 does so under \textit{Known}, resulting in higher social welfare under \textit{Coarse}. Intuitively, precise knowledge about priorities under \textit{Known} gives Student 1 an ``unnecessarily'' high learning benefit, but Student 2's learning benefit falls sharply. In contrast, the less precise knowledge under \textit{Coarse} prompts each Student 1/2 to pool her learning benefits across the two possible priority positions. This sustains strong incentives for both students and induces more information acquisition.

Following this intuition, we show that greater priority transparency does not always improve welfare. Figure \ref{fig:Welfare_Exp 1_th_sym} illustrates how welfare comparisons across the three treatments depend on learning costs when costs are symmetric.\footnote{To make meaningful comparisons across the three treatments, we focus on ex-ante welfare throughout the paper. In all welfare figures, the y-axis origin is set at 79, since random assignment with no students acquiring information yields a benchmark total student payoff of 80.} When all students have the same learning cost, \textit{Coarse} outperforms \textit{Known} when costs lie between 2.5 and 3.75, whereas \textit{Known} yields weakly higher welfare otherwise.\footnote{Figure \ref{fig:Welfare_theory_sym_min} in \ref{sec:supp_figures} presents the theoretical prediction under the minimum-information equilibrium for all symmetric costs in $[1,5]$ on a 0.2 grid. Under the minimum-information equilibrium, \textit{Known} weakly outperforms \textit{Coarse}, which in turn weakly outperforms \textit{Unknown}, for all symmetric costs.} Both \textit{Known} and \textit{Coarse} yield weakly higher welfare than \textit{Unknown} for all symmetric costs. With asymmetric costs, \textit{Unknown} can yield higher welfare than \textit{Known} for some cost combinations; see Figure \ref{fig:Welfare_theory_all} in \ref{sec:supp_figures}.

In summary, our theoretical analysis shows that greater priority transparency can improve information acquisition and welfare in many cases. However, the relationship is not monotonic: coarser priority disclosure can sometimes yield higher welfare.

\section{Experimental Design}
As indicated by the theoretical analysis, students must navigate several steps when deciding whether to acquire information. Given information costs, they first form beliefs about other students' learning decisions. They must then assess how these decisions affect their admission chances before ultimately making a learning decision of their own.

We design three experiments to unpack this process. Experiment 2 examines the modeled game, in which subjects make learning decisions based on their own cost and the distribution of other students' costs. Experiment 1 uses robots to isolate the link between other students' learning decisions and a subject's own learning decision, thereby examining subjects' understanding of strategic complementarity in learning. Moreover, by fixing beliefs about other students' learning decisions, Experiment 1 allows us to control for the specific equilibrium being played in the data analysis. Experiment 3 removes strategic interaction by directly providing subjects with their admission chances. This allows us to isolate how subjects translate admission-chance information into learning decisions, while accounting for non-strategic factors that affect decisions over lotteries, such as risk preferences and other features of subjects’ utility functions. Together, the three experiments identify the key challenges involved in learning decisions and therefore have important policy implications.

In each experiment, we use a multiple price list to elicit subjects' willingness to pay (WTP) for information about their own preferences. We order the experiments to help subjects gradually understand the strategic considerations in the game. Experiment 1 comes first because it fixes other students' learning decisions, allowing subjects to focus on their own best responses. Experiment 2 follows because subjects must instead infer other students' learning decisions from their learning costs. Experiment 3 is conducted last because it removes strategic considerations altogether and serves as a simple elicitation task.

\subsection{Experiment 1: WTP given others' learning decisions}

In Experiment 1, each game is played by one human subject and two ``robots,'' representing the other two students in the market. Subjects are directly informed of the robots' learning decisions and are told that these decisions indicate how a robot will pick between Schools A and B in the admission procedure. Specifically, if a robot has chosen ``Learn,'' it will choose B if B is worth 40 and choose A if B is worth 10. Each case occurs with probability 0.5. If the robot has chosen ``Not Learn,'' it will always choose A.

The experiment begins with 15 practice rounds. In every practice round, each subject is first informed of her own learning cost and the learning decisions of the two robots; costs and decisions are randomly drawn each round. Subjects are asked to make learning decisions based on this information. 

After the information-acquisition stage, the admission procedure starts and subjects choose schools in the order of their priorities. At the end of each round, subjects are informed of the admission result and their hypothetical payoff for that round.

The experimental rounds involve the same decisions as the practice rounds. However, instead of having subjects respond to a single learning cost in each round, we use a multiple price list to elicit their WTP for information. After informing subjects of the robots' learning decisions, we ask them to plan their learning strategy in advance for different learning costs. Table \ref{tab:WTP Elicitation} is displayed on the screen together with the question: ``At which learning cost would you switch from `Learn' to `Not Learn'?''\footnote{We are primarily interested in costs ranging from 1.25 to 5 to make non-trivial comparisons between treatments. Table \ref{tab:WTP Elicitation} covers a wider range aimed at mitigating potential boundary effects in WTP elicitation.} After a subject enters a switch point and clicks a button labeled ``Fill Table,'' the computer chooses ``Learn'' for costs smaller than the switch point and chooses ``Not Learn'' for costs equal to or larger than the switch point. Subjects can also enter 100 to choose ``Learn'' for every cost in the table or enter 0 to choose ``Not Learn'' for every cost. Next, for each subject, one cost is randomly selected as the actual learning cost for the round. If the selected cost is smaller than the switch point, the subject learns the School-B payoff and pays the selected cost. Otherwise, the subject proceeds directly to the admission procedure. 

\begin{table}[htbp]
\centering
\caption{WTP Elicitation}
\begin{tabular}{c|c}
\hline
\textit{If your learning cost is} & \textit{do you want to learn about School B?} \\
\hline
0.2 & \(\square\) Learn \hspace{0.4cm}    \(\square\) Not Learn \\
\hline
0.4 & \(\square\) Learn \hspace{0.4cm}    \(\square\) Not Learn \\
\hline
0.6 & \(\square\) Learn \hspace{0.4cm}    \(\square\) Not Learn \\
\hline
\(\vdots\) & \(\vdots\)  \\
\hline
5.8 & \(\square\) Learn \hspace{0.4cm}    \(\square\) Not Learn \\
\hline
6 & \(\square\) Learn \hspace{0.4cm}    \(\square\) Not Learn \\
\hline
\end{tabular}
\label{tab:WTP Elicitation}
\begin{tabnotes}
    The table presents the multiple price list used to elicit subjects' WTP for information. Rather than choosing between ``Learn'' and  ``Not Learn'' for every cost in the table, subjects select a switch point, and the computer fills the table with ``Learn'' for costs below the switch point and ``Not Learn'' for costs equal to or above the switch point.
\end{tabnotes}
\end{table}

Across rounds, subjects face all possible scenarios of the other two students' learning decisions. In \textit{Known}, each subject can distinguish between the two other students and therefore faces four rounds: (Learn, Learn), (Learn, Not Learn), (Not Learn, Learn), and (Not Learn, Not Learn). The same applies to Students 1/2 in \textit{Coarse}, who can distinguish between the other Student 1/2 and Student 3. By contrast, Student 3 in \textit{Coarse}, and all students in \textit{Unknown}, cannot distinguish between the two other students and therefore face three rounds, corresponding to whether zero, one, or two of the other students learn.

\subsection{Experiment 2: WTP given others' learning costs}
In Experiment 2, subjects are divided into groups of three. In every round, each student is informed of the learning costs of the other two students and is asked to plan her learning strategy for different possible learning costs of her own.\footnote{Subjects are told only that the two costs are assigned to the other two students with equal probability; they do not know which cost is assigned to which student.} We also elicit beliefs by asking each student to predict the learning decisions of the other two students based on their costs in that round. 

There are six rounds. Across these rounds, each subject is informed that the other two students have the following costs: \{1.8, 2.6\}, \{1.8, 3\}, \{2.6, 3\}, \{1.8, 1.8\}, \{2.6, 2.6\}, \{3, 3\}. As detailed in Section \ref{sec:exp_results} and \ref{sec:Recombinant}, recombinant estimation allows us to examine all 27 combinations of the three costs \{1.8, 2.6, 3\}. The costs are chosen so that the theory predicts a non-monotonic welfare effect of priority transparency. For all but two cost combinations, the predicted ranking is $\textit{Coarse} \geq \textit{Known} \geq \textit{Unknown}$. The two exceptions are $(1.8, 2.6, 1.8)$ and $(2.6, 2.6, 1.8)$, for which the theory predicts $\textit{Coarse} \geq \textit{Unknown} \geq \textit{Known}$. This design allows us to test whether the predicted non-monotonic effect arises in the experiment.

In each round, we elicit WTP and beliefs only; the admission procedure is not implemented, and subjects receive no feedback. After the sixth round, the computer randomly selects one round for payment. Using the WTP table submitted for that round, the computer determines whether each subject acquired information. Subjects who acquired information are shown their realized School-B payoff. The admission procedure is then implemented, and subjects make their school decisions for the selected round. A subject's payoff equals her realized school payoff, minus the learning cost if she acquired information, plus 1 AUD if her predictions for both other students in the selected round are correct.

\subsection{Experiment 3: WTP given admission chances}\label{sec:exp3}

In Experiment 3, subjects no longer play against robots or other human subjects. Instead, each subject is directly informed of the probability that School A will be available to her and is asked to plan her learning strategy for different learning costs based on this information.

Across rounds, subjects are informed of admission probabilities corresponding to scenarios with different learning decisions of other students in Experiment 1, as shown in Column 4 of Table \ref{tab:learning incentives}. For instance, subjects in \textit{Unknown} are presented with admission probabilities of 58.3\%, 41.7\%, and 33.3\% in Experiment 3, corresponding to cases in which two, one, and none of the other students acquire information in Experiment 1, respectively. In addition, all subjects in all treatments face a 0\% admission probability in one round. This allows us to measure ``curiosity,'' defined as a subject's WTP for non-instrumental information.\footnote{\cite{chen2021information} measure curiosity using subjects' WTP to learn the realization of a lottery in the Holt--Laury risk-preference elicitation task. They identify curiosity as an important predictor of information-acquisition behavior.} Each subject faces four rounds with different admission probabilities.\footnote{Some students have fewer relevant admission probabilities. For example, Student 1 in \textit{Known} has an admission probability of 1 regardless of the other students' learning decisions. In such cases, we included additional admission probabilities that were not relevant to that student. Responses to these additional questions are not used in the analysis.}

In each round, we elicit WTP without providing feedback. After the fourth round, the computer randomly selects one round and draws a learning cost. The subject's stated learning decision at that cost determines whether she receives information about her preferences. The computer then randomly determines whether School A is available according to the probability specified in the selected round. If School A is available, the subject chooses between Schools A and B; otherwise, she is automatically assigned to School B.

\subsection{Experimental Procedure}

The experiment was conducted in the Experimental Economics Laboratory of the University of Melbourne $\text{(E}^{2}\text{MU)}$ and programmed in z-Tree. Upon entering the lab, participants received the instructions for Experiment 1. Additional instructions were provided before the start of Experiments 2 and 3. We employed a between-subject design, with each subject participating in only one of the three treatments. Depending on the treatment, subjects received no, partial, or full knowledge about their priorities at the beginning of Experiment 1, and this knowledge remained unchanged throughout all three experiments.\footnote{In \textit{Unknown}, student priorities were randomly drawn at the beginning of every round in each experiment. In \textit{Known}, each subject retained the same priority throughout all three experiments. In \textit{Coarse}, the top two students remained in the top-two group throughout, but their exact priorities (1 or 2) were randomly determined at the beginning of every round. Student 3 retained the same priority throughout.} At the end of each session, one round from one experiment was randomly selected for payment. In total, we conducted nine sessions with 180 participants. Treatments \textit{Unknown}, \textit{Known}, and \textit{Coarse} each had three sessions, with 63, 63, and 54 subjects, respectively. Average earnings were approximately 32.6 AUD, including a participation fee of 10 AUD. Sessions lasted approximately 90 minutes.

\subsection{Hypotheses}

Our experimental design allows us to test the following hypotheses derived from the theoretical framework.

\begin{hypothesis}
(\textit{Welfare}) Total student payoff follows model predictions: (i) the welfare ranking of \textit{Known} relative to \textit{Unknown} and \textit{Coarse} depends on the distribution of information costs and equilibrium selection; and (ii) \textit{Coarse} weakly dominates \textit{Unknown} for all cost combinations.
\end{hypothesis}

\begin{hypothesis}
(\textit{Learning decisions}) Students' WTP for information about their own preferences follows the model predictions specified in Table \ref{tab:learning incentives}.
\end{hypothesis}

\begin{hypothesis}
(\textit{Strategic complementarity}) A student's WTP for information increases as more students relevant to her choose to acquire information.
\end{hypothesis}

\begin{hypothesis}
(\textit{Beliefs}) Students form correct beliefs about other students' learning decisions based on the distribution of information costs.
\end{hypothesis}

\begin{hypothesis}
(\textit{School decisions}) Students' choices between Schools A and B follow model predictions: they choose School B after observing a School-B payoff of 40, and choose School A after observing a payoff of 10 or when they do not observe their School-B payoff.
\end{hypothesis}

\section{Experimental results}
\label{sec:exp_results}

In this section, we examine whether the experimental evidence supports the hypotheses above. We begin with student welfare, the primary market-level outcome, and then examine individual decision-making through WTP, strategic complementarity, beliefs, and school decisions to explain the observed treatment differences and behavioral biases. Unless otherwise noted, statistical significance refers to the 5\% level. When reporting experimental results, ($>$) and ($<$) indicate statistically significant differences, while ($=$) indicates differences that are not statistically significant.

\subsection{Welfare}

We use total student payoff in each experimental market as our measure of social welfare. All three experiments are designed to elicit subjects' strategies in different scenarios, without interaction among subjects during the elicitation stage. In Experiment 1, subjects interact only with robots, not with each other, while Experiment 3 involves only individual decision-making. In Experiment 2, subjects receive no feedback until the end of the elicitation stage; although they subsequently participate in sequential RSD to determine their payoff for the selected round, we do not use these realized outcomes in the analysis. This design allows us to use the recombinant estimator to compare welfare across the three treatments (\citealp{mullin2006recombinant,abrevaya2008recombinant}). In recombinant estimation, we combine subjects into different groups of three and use their elicited strategies to compute the outcomes that would have occurred under such groupings. \ref{sec:Recombinant} provides a detailed description of the estimation approach.

We begin the welfare analysis with Experiment 2 because it corresponds to the modeled game and most closely resembles the game that students face in school choice. In this experiment, each subject faces two other subjects and makes decisions based on the distribution of the other students' learning costs.
Figure \ref{fig:Welfare_Exp 2} presents total student payoff averaged across markets under each treatment. According to the theory, for all but two of the cost combinations examined in Experiment 2, the welfare ranking should be $\textit{Coarse} \geq \textit{Known} \geq \textit{Unknown}$. The two exceptions are $(1.8, 2.6, 1.8)$ and $(2.6, 2.6, 1.8)$, for which the theory predicts $\textit{Coarse} \geq \textit{Unknown} \geq \textit{Known}$. The experimental evidence in Figure~\ref{fig:Welfare_Exp 2_exp} clearly contradicts this prediction and instead favors \textit{Known}. Table \ref{tab:welfare_exp2_costs} in \ref{sec:supp_figures} reports total student payoffs and pairwise treatment comparisons in Experiment 2 for each cost combination. Across all cost combinations, the experimental welfare ranking is
\[
\textit{Known} \geq \textit{Coarse} \geq \textit{Unknown}.
\]
Moreover, compared with the theoretical payoff levels shown in Figure \ref{fig:Welfare_Exp 2_theo}, observed payoffs are significantly lower in all treatments. This suggests that subjects may exhibit systematic behavioral deviations or make systematic strategic mistakes, which we examine in the subsequent sections.

\begin{figure}[htbp]
  \centering
  \begin{subfigure}{0.49\textwidth}
      \includegraphics[width=\linewidth]{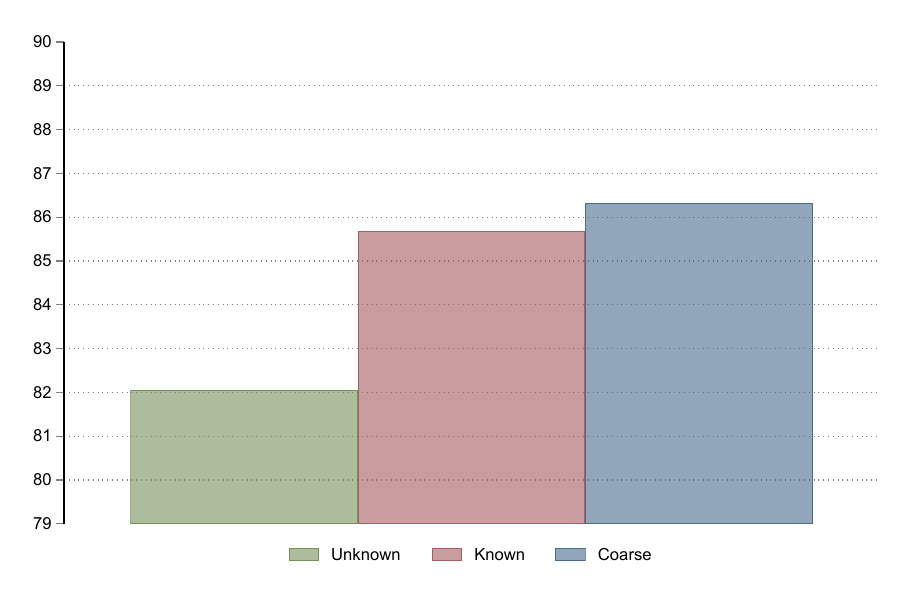}
    \caption{Theory}
    \label{fig:Welfare_Exp 2_theo}
  \end{subfigure}
  \hfill
  \begin{subfigure}{0.49\textwidth}
    \includegraphics[width=\linewidth]{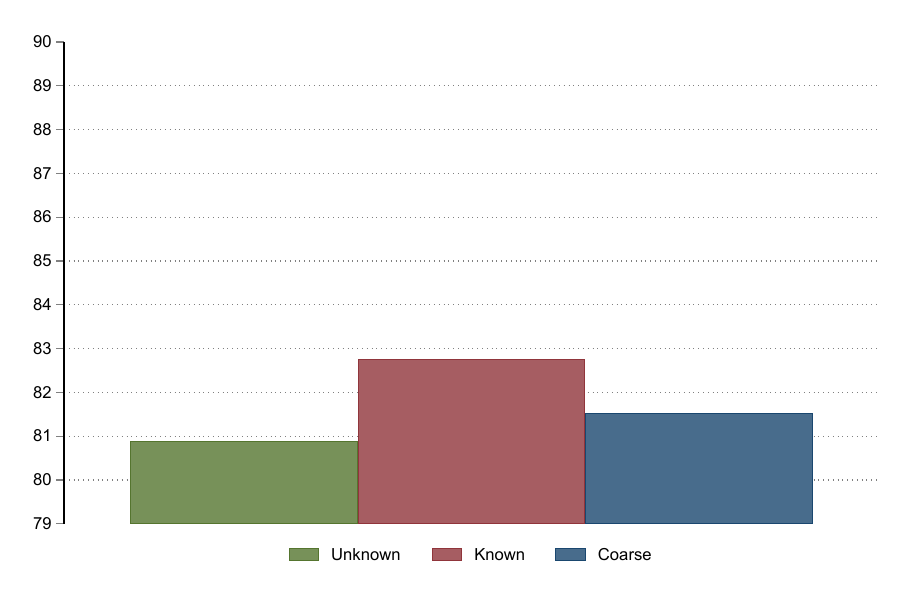}
   \caption{Experiment 2}
    \label{fig:Welfare_Exp 2_exp}
  \end{subfigure}
  \caption{Total Student Payoff in Experiment 2}
  \label{fig:Welfare_Exp 2}
  \begin{tabnotes}
      The figures show total student payoff averaged across markets under each treatment. Panel (a) presents the theoretical prediction under maximum-information equilibrium selection, and Panel (b) presents the experimental results.
  \end{tabnotes}
\end{figure}

Because \textit{Coarse} and \textit{Unknown} exhibit multiple equilibria, lower welfare in these treatments could be the result of equilibrium selection. Experiment 1 allows us to rule out this explanation by informing subjects of the other students' learning decisions, which are implemented by robots, thereby fixing their beliefs and controlling for equilibrium selection. Specifically, we map each Experiment-2 cost combination into the corresponding robot learning decisions in Experiment 1 under the maximum-information equilibrium. We then calculate welfare outcomes using recombinant estimation, assuming correct play in the sequential RSD. Table~\ref{tab:welfare_exp1_exp2cost_max} in \ref{sec:supp_figures} reports total student payoffs and pairwise treatment comparisons. Across all cost combinations, the experimental welfare ranking \textit{Known} $\geq$ \textit{Coarse} $\geq$ \textit{Unknown} remains, suggesting that equilibrium selection is unlikely to explain the discrepancy between the theoretical predictions and the experimental evidence observed in Experiment 2.

\begin{figure}[htbp]
 \centering
 \includegraphics[width=0.8\linewidth]{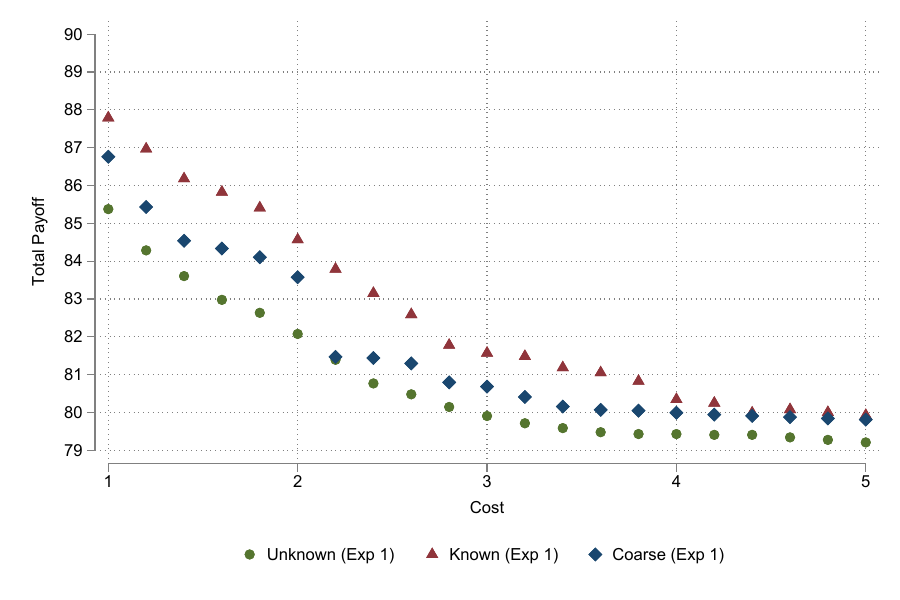}
 \caption{Total Student Payoff in Experiment 1 (Symmetric Costs)}
 \label{fig:Welfare_Exp 1_sym}
 \begin{tabnotes}
     The figure reports total student payoff in Experiment 1 for all symmetric costs in $[1,5]$ on a 0.2 grid, fixing beliefs at the maximum-information equilibrium.
 \end{tabnotes}
\end{figure}

While Experiment~2 focuses on 27 specific cost combinations, Experiment 1 allows us to test the robustness of our results to cost selection by extending the analysis to any cost triple in $[1,5]$ on a 0.2 grid. Figure \ref{fig:Welfare_Exp 1_sym} presents the welfare comparison for symmetric costs, fixing beliefs at the maximum-information equilibrium. It shows that the same welfare ranking, \textit{Known} $\geq$ \textit{Coarse} $\geq$ \textit{Unknown}, holds for all symmetric cost levels.\footnote{Experiment 1 also allows us to examine the minimum-information equilibrium. Figure~\ref{fig:Welfare_sym_min} in \ref{sec:supp_figures} presents the welfare comparison for symmetric costs, showing both the theoretical prediction under minimum-information equilibrium selection and the corresponding Experiment 1 results when beliefs are fixed at the minimum-information equilibrium. In this case, the experimental welfare ranking also largely follows the same pattern, with \textit{Known} weakly higher than \textit{Coarse}, which in turn is weakly higher than \textit{Unknown}, for almost all costs.} This conclusion continues to hold when we allow for asymmetric costs. Figure \ref{fig:Welfare_Exp 1_all} groups cost combinations according to the theoretically predicted treatment ranking.\footnote{The corresponding theoretical predictions are reported in Figure~\ref{fig:Welfare_theory_all} in \ref{sec:supp_figures}.} Across all groups, the same experimental pattern emerges. These results suggest that the observed welfare ranking cannot be attributed to equilibrium selection and is robust across cost combinations.

\begin{figure}[htbp]
 \centering
 \includegraphics[width=1\linewidth]{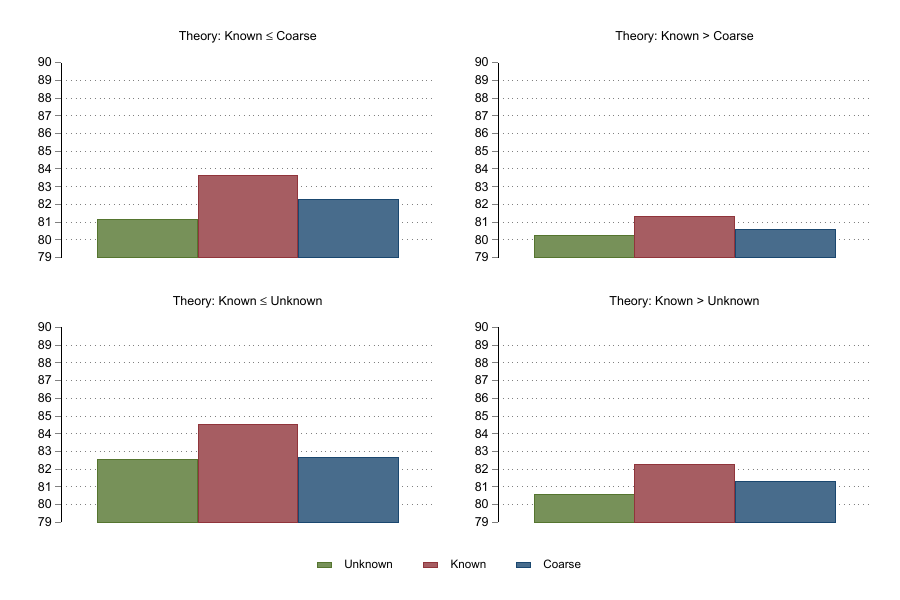}
 \caption{Total Student Payoff in Experiment 1 (All Costs)}
 \label{fig:Welfare_Exp 1_all}
 \begin{tabnotes}
     The figure reports total student payoff in Experiment 1 for all cost combinations in $[1,5]^3$ on a 0.2 grid, fixing beliefs at the maximum-information equilibrium. Cost combinations are grouped according to the theoretically predicted treatment ranking. The corresponding theoretical predictions are reported in Figure~\ref{fig:Welfare_theory_all} in \ref{sec:supp_figures}.
 \end{tabnotes}
\end{figure}

Our findings on student welfare are summarized below.
\medskip{}

\noindent \textbf{Result 1 (Welfare)}

\noindent \textit{(1) In terms of total student payoff, \textit{Known} $\geq$ \textit{Coarse} $\geq$ \textit{Unknown}.}

\noindent \textit{(2) In all treatments, student payoffs are significantly lower than the theoretical predictions.}

\medskip{}

These findings support part (ii) of Hypothesis 1: \textit{Coarse} generally outperforms \textit{Unknown}. However, the results largely contradict part (i) and reject the prediction that less knowledge about priorities can benefit students. Instead, we find that more precise and definite knowledge about priorities generally improves student welfare. We rule out equilibrium selection as an explanation and, in subsequent sections, analyze individual behavior to explore other potential explanations for the welfare results, including subjects' understanding of strategic considerations in different treatments and behavioral patterns in response to admission chances.

\subsection{WTP}

This section examines learning decisions by comparing the subjects' WTP with the theoretical predictions. We focus on Experiment 1 because it allows us to control for equilibrium selection and derive clear theoretical benchmarks.

We use deviation in WTP, defined as the difference between experimental WTP and theoretical WTP, to assess overall tendencies in information-acquisition behavior. Because positive and negative deviations offset each other, we also use absolute deviation, defined as the absolute value of this difference, to measure the magnitude of deviations from the theoretical prediction. Figure \ref{fig:WTP_Exp 1} reports the average deviation (``Unknown,'' ``Known,'' and ``Coarse'') along with the average absolute deviation (``Unknown (Abs),'' ``Known (Abs),'' and ``Coarse (Abs)'') across three treatments.

The figure indicates a tendency to under-acquire information in all three treatments: the average deviation is negative in \textit{Unknown} ($p=0.043$), \textit{Known} ($p=0.066$), and \textit{Coarse} ($p<0.01$). Although the evidence is weaker in \textit{Known}, the average deviation remains negative. We find no significant treatment differences in the magnitude of deviations from the theoretical prediction.

The literature provides mixed experimental evidence on whether subjects tend to over- or under-acquire information. \citet{hakimov2023costly} and \citet{descamps2022learning} find that subjects acquire too much information when information costs are high and too little when costs are low. In contrast, other studies document overall over-search, including \citet{chen2021information} in school choice, \citet{bhattacharya2017voting} in voting, and \citet{gretschko2015excess} in auctions. One potential explanation for the overall lower level of information acquisition in our study is that subjects had extensive training before the WTP elicitation. Figure \ref{fig:deviation_practice} in \ref{sec:supp_figures} reports the average deviation of learning decisions from the theoretical predictions during the practice rounds. Subjects initially over-search, but the level of over-search declines over time ($p=0.022$), and they exhibit under-search in some later rounds.\footnote{We code ``Learn'' as 1 and ``Not Learn'' as 0, and define deviation as the experimental decision minus the theoretical prediction. Thus, zero indicates correct learning, positive values indicate over-search on average, and negative values indicate under-search on average. The $p$-value is derived from an ordered logit regression of the deviation in learning on round number, controlling for curiosity and demographic attributes. Standard errors are clustered at the subject level.} Risk aversion provides another possible explanation, since it can reduce subjects’ incentives to acquire information in our setting. However, as discussed in Section~\ref{sec:exp3}, Experiment 3 allows us to account for risk aversion and other features of subjects’ utility functions; our results are robust to this control.

\begin{figure}[htbp]
 \centering
 \includegraphics[width=0.8\linewidth]{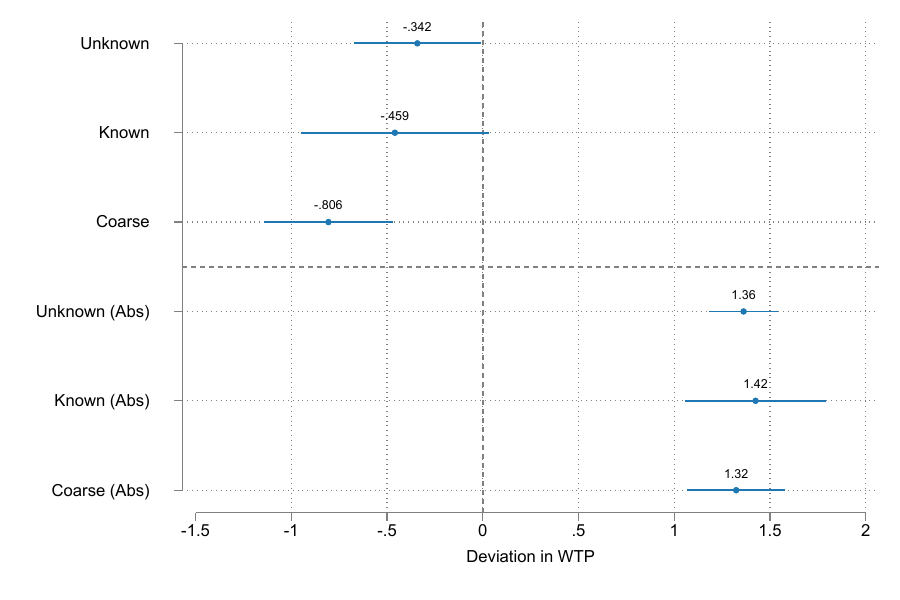}
 \caption{Deviation in WTP in Experiment 1}
 \label{fig:WTP_Exp 1}
   \begin{tabnotes}
 ``Unknown,'' ``Known,'' and ``Coarse'' refer to the deviation in WTP (the difference between experimental WTP and theoretical WTP) in the corresponding treatments. ``Unknown (Abs),'' ``Known (Abs),'' and ``Coarse (Abs)'' refer to the absolute deviation in WTP (the absolute value of the difference between experimental WTP and theoretical WTP) in corresponding treatments. Average levels are presented with 95\% confidence intervals, adjusting for censoring from below and above, and standard errors are clustered at the level of individual subjects.
  \end{tabnotes}
\end{figure}

\begin{table}[htbp]
\centering
\caption{Determinants of WTP and Absolute Deviation in WTP}
\def\sym#1{\ifmmode^{#1}\else\(^{#1}\)\fi}
\begin{tabular*}{1\hsize}{@{\hskip\tabcolsep\extracolsep\fill}l*{2}{cc}}
\hline\hline
            &\multicolumn{2}{c}{(1)}           &\multicolumn{2}{c}{(2)}           \\
            &\multicolumn{2}{c}{WTP}        &\multicolumn{2}{c}{Absolute Deviation in WTP}  \\
\hline
%main        &                     &            &                     &            \\
\textit{Known}&     0.00678         &     (0.262)&       0.121         &     (0.199)\\
\textit{Coarse}&       -0.335         &     (0.242)&     -0.0521         &     (0.161)\\
\textit{Curiosity}   &       0.435\sym{***}&     (0.114)&       0.286\sym{***}&    (0.0821)\\
\textit{Female}      &       -0.284         &     (0.231)&       0.221         &     (0.177)\\
\textit{Graduate}        &      -0.221         &     (0.278)&      -0.197         &     (0.223)\\
\textit{Age}         &      0.0269         &    (0.0229)&     0.00735         &    (0.0214)\\
Constant      &        1.153\sym{**}  &     (0.518)&       0.905\sym{*}         &     (0.479)\\
\hline
\(N\)       &         639         &            &         639         &            \\
\hline\hline
\end{tabular*} 
    \begin{tabnotes} 
        In the table, \textit{Known} and \textit{Coarse} are dummy variables that equal to one for the respective treatments and zero otherwise. \textit{Curiosity} is the WTP for information when the admission chance is zero in Experiment 3. \textit{Female} is a dummy that equals one for female subjects and zero otherwise. \textit{Graduate} is a dummy that equals one for graduate students and zero for undergraduate students. \textit{Age} represents the age of subjects, ranging from 18 to 52 years old. Tobit models are adopted with censoring from below and above. Standard errors are clustered at the level of individual subjects. \sym{*} \(p<0.1\), \sym{**} \(p<0.05\), \sym{***} \(p<0.01\).
    \end{tabnotes}
\label{tab:regression WTP & abs deviation}
\end{table}

Next, we explore factors that may influence subjects' learning decisions and their deviations from the theoretical prediction. Because WTP is elicited over a bounded interval, we use Tobit regressions to account for censoring from below and above. Standard errors are clustered at the subject level. In Table \ref{tab:regression WTP & abs deviation}, regression (1) examines subjects' WTP in Experiment 1, regressing it on treatment dummies, a measure of curiosity, and demographic attributes, including gender, age, and whether subjects are graduate or undergraduate students.

Overall, the results reveal no significant effect of treatment dummies or demographic attributes on subjects' WTP for information. The curiosity measure comes from Experiment 3, where we elicit subjects' WTP for information when their admission chance at School A is 0\%, so that information has no instrumental value. Approximately 17\% of subjects exhibit curiosity by assigning a positive WTP to this non-instrumental information. The regression results indicate a positive relationship between curiosity and WTP in Experiment 1, suggesting that subjects with higher curiosity tend to acquire more information.\footnote{Table \ref{tab:regression WTP_treatment} in \ref{sec:supp_figures} shows that the effect of curiosity is primarily driven by \textit{Known}.} This finding is consistent with \citet{chen2021information}, who use a different measure of curiosity in a different information-acquisition environment. Together, the evidence suggests that curiosity is an important predictor of information-acquisition behavior across settings and measurement approaches.

Regression (2) examines the magnitude of deviations from the theoretical prediction, measured as the absolute difference between subjects' WTP and the theoretical prediction. The results indicate that subjects with higher curiosity deviate further from the theoretical prediction. We find no significant effects of treatment dummies or demographic attributes.

\begin{figure}[htbp]
 \centering
 \includegraphics[width=0.8\linewidth]{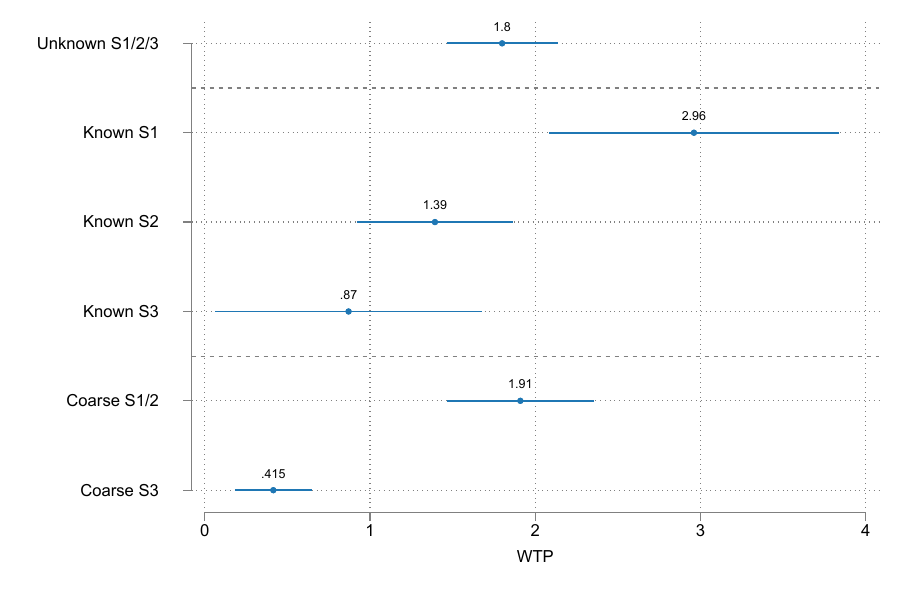}
 \caption{WTP by Treatments and Roles}
 \label{fig:WTP_by_role}
    \begin{tabnotes}
        The figure shows average WTP by treatment and priority roles. Averages are reported with 95\% confidence intervals, adjusting for censoring from below and above. Standard errors are clustered at the subject level.

  \end{tabnotes}
\end{figure}

Lastly, we examine whether subjects take their knowledge of priorities into account when making learning decisions. Figure \ref{fig:WTP_by_role} presents average WTP by treatment and priority roles. In \textit{Known}, average WTP is highest for Student 1, followed by Student 2 and then Student 3, with each difference statistically significant. In \textit{Coarse}, Student 1/2 has a significantly higher average WTP than Student 3. These patterns are confirmed by the regressions in Table \ref{tab:regression WTP_treatment} in \ref{sec:supp_figures}. Thus, although the significant deviation of average WTP from the theoretical prediction contradicts Hypothesis 2, subjects do appear to understand the basic link between priority and learning incentives: higher priority increases the value of acquiring information.

Below, we summarize our main findings concerning students' WTP for information about their own preferences.

\medskip{}

\noindent \textbf{Result 2 (WTP)}

\noindent \textit{(1) Average WTP is below the theoretical prediction in all treatments. The difference is significant at 5\% in \textit{Unknown} and \textit{Coarse}, and at 10\% in \textit{Known}.}

\noindent \textit{(2) Subjects with higher curiosity tend to have higher WTP and deviate further from the theoretical prediction.}

\noindent \textit{(3) When provided with knowledge about priorities, as in \textit{Known} and \textit{Coarse}, subjects choose higher WTP when assigned to higher-priority roles, consistent with the direction of the theoretical prediction.}

\subsection{Strategic Complementarity}

In this section, we examine whether subjects understand the externality generated by information acquisition and treat learning decisions as strategic complements. We first analyze how subjects' WTP changes in response to the learning decisions of other students in Experiment 1. We then use data from Experiment 3 to disentangle different potential explanations for the observed behavior.

To evaluate subjects' understanding of strategic complementarity in learning, we begin with a simple directional classification. We define $\Delta\text{WTP}$ as the change in a subject's WTP when relevant students---those whose learning decisions affect the subject's admission chances---switch from ``Not Learn'' to ``Learn''. We classify a subject as treating learning decisions as strategic complements if $\Delta\text{WTP}>0$, and as treating them as strategic substitutes if $\Delta\text{WTP}<0$. Subjects who do not fall into either category are classified as exhibiting inconsistent responses or no response.\footnote{There may be multiple cases for a given subject in which a relevant student switches from ``Not Learn'' to ``Learn''. We consider all such cases and classify a subject as treating learning decisions as strategic complements if she increases her WTP at least once and never decreases it. Conversely, we classify a subject as treating learning decisions as strategic substitutes if she decreases her WTP at least once and never increases it. We exclude Student 1 in \textit{Known} from this analysis because her WTP should not depend on the learning decisions of others.}\label{fn:classification}

\begin{table}[htbp]
\centering
\small
\caption{Strategic Complementarity}
\begin{tabularx}{\textwidth}{l|c|c|c|p{1.7cm}|p{1.7cm}|p{1.7cm}}
\hline
& \multicolumn{3}{c|}{\textbf{Treatment}} & \multicolumn{3}{c}{\textbf{$p$-Value for Test of Equality}} \\
\hline
& \textit{Unknown} & \textit{Known} & \textit{Coarse} & \textit{Unknown = Known} & \textit{Unknown = Coarse} & \textit{Known \newline = Coarse} \\
\hline
Strategic complements & 19\% & 40\% & 33\% & 0.016 & 0.078 & 0.471 \\
Strategic substitutes & 22\% & 12\% & 19\% & 0.179 & 0.621 & 0.376 \\
Inconsistent/no responses & 59\% & 48\% & 48\% & 0.263 & 0.252 & 0.959 \\
\hline
\end{tabularx}
\begin{tabnotes}
The table reports the proportion of subjects who treat learning decisions as strategic complements, substitutes, or have inconsistent or no responses. Student 1 in the Known treatment is excluded. The proportions in the treatment columns are calculated according to the classification rule described in footnote \ref{fn:classification}. The $p$-values in the last three columns are obtained from two-sample tests of proportions.
\end{tabnotes}
\label{tab:strategic complementarity}
\end{table}

Table \ref{tab:strategic complementarity} summarizes the distribution of subjects across these categories by treatment. The proportion of subjects who treat learning decisions as strategic complements is highest in \textit{Known}, followed by \textit{Coarse}, and lowest in \textit{Unknown}. The difference between \textit{Unknown} and \textit{Known} is significant at the 5\% level, while the difference between \textit{Unknown} and \textit{Coarse} is significant at the 10\% level. Conversely, the proportion of subjects who treat learning decisions as strategic substitutes follows the opposite pattern, although the treatment differences are not statistically significant. Notably, in \textit{Unknown}, more subjects treat learning decisions as strategic substitutes than as strategic complements, suggesting a limited understanding of learning incentives in this treatment. Taken together, these patterns suggest that subjects' understanding of strategic complementarity is strongest in \textit{Known}, followed by \textit{Coarse}, and weakest in \textit{Unknown}; this ordering aligns with the welfare results and suggests that strategic mistakes in learning decisions contribute to the observed welfare differences across treatments.

\medskip{}

\noindent \textbf{Result 3 (Strategic complementarity)}

\noindent \textit{The proportion of subjects who treat learning decisions as strategic complements is lower in  \textit{Unknown} than in \textit{Known} or \textit{Coarse}; the \textit{Unknown}--\textit{Known} difference is significant at 5\%, and the \textit{Unknown}--\textit{Coarse} difference at 10\%.}

\medskip{}

\begin{figure}[htbp]
 \centering
 \includegraphics[width=0.8\linewidth]{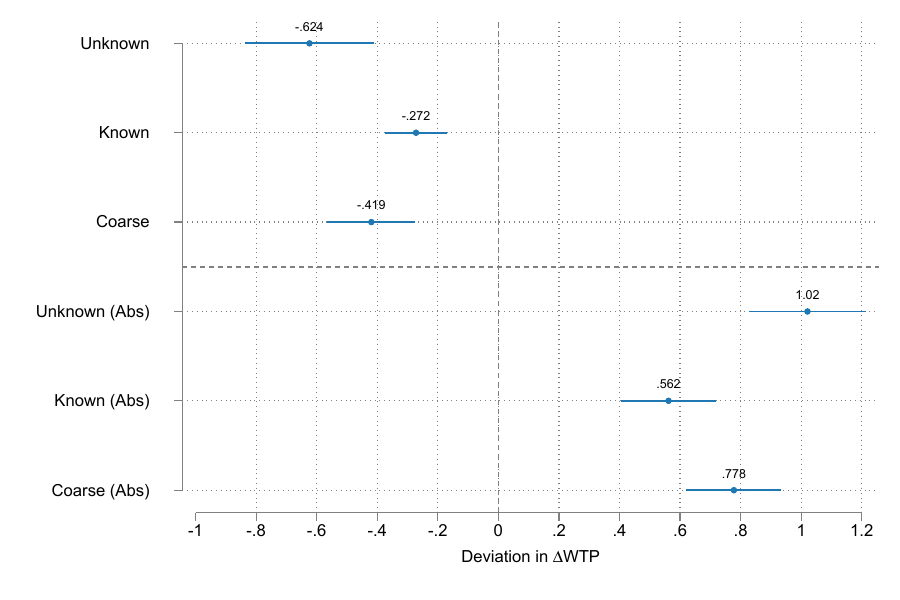}
 \caption{Deviation and Absolute Deviation in $\Delta WTP$ in Experiment 1}
 \label{fig:DWTP_Exp1_abs}
  \begin{tabnotes}
  The figure shows the difference between the experimental response and the theoretical prediction of the WTP change in response to a relevant student's switch from ``Not Learn'' to ``Learn'': $\mathcal{D}\Delta\text{WTP}_{1} = \Delta\text{WTP}_{1} - \Delta\text{WTP}_{1}^{Theory}$. ``Unknown,'' ``Known,'' and ``Coarse'' represent $\mathcal{D}\Delta\text{WTP}_{1}$, while ``Unknown (Abs),'' ``Known (Abs),'' and ``Coarse (Abs)'' represent $|\mathcal{D}\Delta\text{WTP}_{1}|$ in the corresponding treatments. Average levels are presented with 95\% confidence intervals, and standard errors are clustered at the subject level.
  \end{tabnotes}
\end{figure}

Result 3 provides support for Hypothesis 3, suggesting that subjects exhibit some understanding of strategic complementarity in learning. However, the extent of this understanding may differ across treatments. We next examine the magnitude of the subjects' responses and their deviations from the theoretical prediction. To do so, we define $\mathcal{D}\Delta\text{WTP}_{1}$, where the subscript 1 refers to Experiment 1, as the difference between a subject's actual response in Experiment 1 to a relevant student's switch from ``Not Learn'' to ``Learn''  and the theoretical prediction, so that $\mathcal{D}\Delta\text{WTP}_{1}=\Delta\text{WTP}_{1} - \Delta\text{WTP}_{1}^{Theory}$. We also consider the absolute value of this deviation because positive and negative deviations may offset each other in the signed measure.

Figure \ref{fig:DWTP_Exp1_abs} reports the average levels of $\mathcal{D}\Delta\text{WTP}_{1}$ and $|\mathcal{D}\Delta\text{WTP}_{1}|$ across the three treatments. The average value of $\mathcal{D}\Delta\text{WTP}_{1}$ is significantly below 0 in all treatments ($p<0.001$), indicating that subjects respond insufficiently to changes in others' learning decisions in Experiment 1. The magnitude of $|\mathcal{D}\Delta\text{WTP}_{1}|$ is the smallest in \textit{Known}, followed by \textit{Coarse}, and largest in \textit{Unknown} ($p<0.001$ for \textit{Unknown}--\textit{Known}; $p=0.051$ for both \textit{Known}--\textit{Coarse} and \textit{Unknown}--\textit{Coarse}).\footnote{The $p$-values are derived from linear regressions of $|\mathcal{D}\Delta\text{WTP}_{1}|$ on treatment dummies, with standard errors clustered at the subject level.} This pattern suggests that subjects make fewer strategic mistakes when more knowledge about priorities is available, which is consistent with Result 3.

We next disentangle potential sources of the deviations observed in Experiment 1. When making a learning decision given the learning decisions of others, a student must first assess how others' choices affect her admission probability and then determine her WTP given that probability. Deviations in Experiment 1 may therefore arise from mistakes in the first step: subjects may fail to understand the externality generated by information acquisition and the resulting strategic complementarity in learning. Alternatively, deviations may arise from the second step: subjects may deviate from the theoretically optimal learning decision even when admission chances are provided directly and need not be inferred. In Experiment 3, we elicit subjects' WTP for the same admission probabilities as those implied by robots' actions in Experiment 1. We therefore interpret the deviations in Experiment 3 as capturing the second-step component of the deviations observed in Experiment 1.

To assess the extent to which deviations in Experiment 1 can be explained by deviations observed in Experiment 3, we define the deviation in $\Delta\text{WTP}$ in Experiment 3 as $\mathcal{D}\Delta\text{WTP}_{3}=(\Delta\text{WTP}_{3} - \Delta\text{WTP}_{3}^{Theory})$, parallel to $\mathcal{D}\Delta\text{WTP}_{1}$. Table \ref{tab:deviation in DWTP_exp1vs3} presents the regressions of (1) $\mathcal{D}\Delta\text{WTP}_{1}$ on $\mathcal{D}\Delta\text{WTP}_{3}$, and (2) $|\mathcal{D}\Delta\text{WTP}_{1}|$ on $|\mathcal{D}\Delta\text{WTP}_{3}|$. The two regressions yield broadly consistent conclusions. For ease of interpretation, we focus on regression (2) in Table \ref{tab:deviation in DWTP_exp1vs3} in the subsequent analysis.

In the regression, $|\mathcal{D}\Delta\text{WTP}_{3}|$ has only a small and statistically insignificant marginal effect. Since Experiment 3 captures subjects' direct responses to admission chances, this insignificant coefficient suggests that, in \textit{Unknown}, deviations in Experiment 1 mainly arise from mistakes in mapping others' learning decisions into one's own admission probabilities. By contrast, the interaction terms $|\mathcal{D}\Delta\text{WTP}_{3}|\times\textit{Known}$ and $|\mathcal{D}\Delta\text{WTP}_{3}| \times\textit{Coarse}$ are large and statistically significant, indicating that deviations in WTP given admission chances account for a substantial share of the overall deviations in these treatments. That is, in \textit{Known} and \textit{Coarse}, subjects who respond suboptimally to directly provided admission chances in Experiment 3 are also more likely to respond suboptimally to changes in others’ learning decisions in Experiment 1. This effect is more pronounced in \textit{Known} than in \textit{Coarse} ($p=0.011$).

\begin{table}[htbp]
\centering
\caption{Deviation in $\Delta\text{WTP}$ in Experiments 1 and 3}
\def\sym#1{\ifmmode^{#1}\else\(^{#1}\)\fi}
\small
\begin{tabular*}{1\hsize}{@{\hskip\tabcolsep\extracolsep\fill}l*{3}{cc}}
\hline\hline
            &\multicolumn{2}{c}{(1)}           &\multicolumn{2}{c}{(2)}           &\multicolumn{2}{c}{(3)}           \\
            &\multicolumn{2}{c}{$\mathcal{D}\Delta\text{WTP}_{1}$}&\multicolumn{2}{c}{$|\mathcal{D}\Delta\text{WTP}_{1}|$}&\multicolumn{2}{c}{$|\mathcal{D}\Delta\text{WTP}_{1}-\mathcal{D}\Delta\text{WTP}_{3}|$}\\
\hline
\textit{Known}&       0.468\sym{***}&     (0.122)&      -0.694\sym{***}&     (0.146)&      -0.538\sym{***}&     (0.145)\\
\textit{Coarse}&       0.411\sym{**}  &     (0.159)&      -0.461\sym{***} &     (0.163)&      -0.347\sym{**}  &     (0.161)\\
$\mathcal{D}\Delta\text{WTP}_{3}$&       0.137         &     (0.105)&                     &            &                     &            \\
$\mathcal{D}\Delta\text{WTP}_{3}\times$\textit{Known}&       0.705\sym{***}&     (0.182)&                     &            &                     &            \\
$\mathcal{D}\Delta\text{WTP}_{3}\times$\textit{Coarse}&       0.441\sym{**}  &     (0.200)&                     &            &                     &            \\
$|\mathcal{D}\Delta\text{WTP}_{3}|$&                     &            &     -0.0265         &    (0.0985)&                     &            \\
$|\mathcal{D}\Delta\text{WTP}_{3}|\times$\textit{Known}&                     &            &       0.890\sym{***}&     (0.135)&                     &            \\
$|\mathcal{D}\Delta\text{WTP}_{3}|\times$\textit{Coarse}&                     &            &       0.466\sym{***} &     (0.168)&                     &            \\
\textit{Female}      &      0.0248         &    (0.0929)&    -0.00597         &     (0.108)&     -0.0189         &     (0.125)\\
\textit{Graduate}        &      0.0254         &    (0.0950)&      0.0794         &     (0.118)&       0.215         &     (0.145)\\
\textit{Age}         &    -0.00804         &   (0.00845)&    0.000547         &   (0.00881)&    -0.00453         &    (0.0106)\\
Constant      &      -0.410\sym{*}         &     (0.212)&       0.998\sym{***}&     (0.232)&       0.985\sym{***}&     (0.262)\\
\hline
\(N\)       &         459         &            &         459         &            &         459         &            \\
\hline\hline
\end{tabular*} 
    \begin{tabnotes} 
        The table examines how much of the deviation in Experiment 1 can be explained by the analogous deviation in Experiment 3. The variables $\mathcal{D}\Delta\text{WTP}_{1}$ and $|\mathcal{D}\Delta\text{WTP}_{1}|$ denote the deviation and absolute deviation, respectively, in $\Delta$WTP from the theoretical prediction in Experiment 1. Similarly, $\mathcal{D}\Delta\text{WTP}_{3}$ and $|\mathcal{D}\Delta\text{WTP}_{3}|$ denote the corresponding variables in Experiment 3. \textit{Known} and \textit{Coarse} are treatment dummies, with \textit{Unknown} as the omitted category. Interaction terms are defined by interacting the Experiment 3 deviation variables with the treatment dummies. \textit{Female} is a dummy equal to one for female subjects and zero otherwise. \textit{Graduate} is a dummy equal to one for graduate students and zero for undergraduate students. \textit{Age} denotes subjects' age, ranging from 18 to 52. Linear models are used for simplicity; accounting for censoring from below and above does not alter the qualitative conclusions. Standard errors are clustered at the subject level. \sym{*} \(p<0.1\), \sym{**} \(p<0.05\), \sym{***} \(p<0.01\).

    \end{tabnotes}
\label{tab:deviation in DWTP_exp1vs3}
\end{table}

To further investigate the contribution of deviations observed in Experiment 3, we subtract the Experiment 3 deviation from the Experiment 1 deviation, that is, $(\mathcal{D}\Delta\text{WTP}_{1}-\mathcal{D}\Delta\text{WTP}_{3})$.\footnote{Note that $\mathcal{D}\Delta\text{WTP}_{1}-\mathcal{D}\Delta\text{WTP}_{3}=(\Delta\text{WTP}_{1} - \Delta\text{WTP}_{1}^{Theory})-(\Delta\text{WTP}_{3} - \Delta\text{WTP}_{3}^{Theory})=\Delta\text{WTP}_{1} - \Delta\text{WTP}_{3}$ because $\Delta\text{WTP}_{1}^{Theory} = \Delta\text{WTP}_{3}^{Theory}$ for the corresponding admission probabilities.} This difference captures the residual deviation associated with mistakes in mapping others' learning decisions into one's own admission chances.\footnote{Because Experiment 3 is conducted after Experiment 1, differences between matched responses may also reflect learning, fatigue, or other carryover effects. These factors may affect the magnitude of the residual deviations, but are unlikely to explain the observed treatment differences.}

\begin{figure}[htbp]
 \centering
 \includegraphics[width=0.8\linewidth]{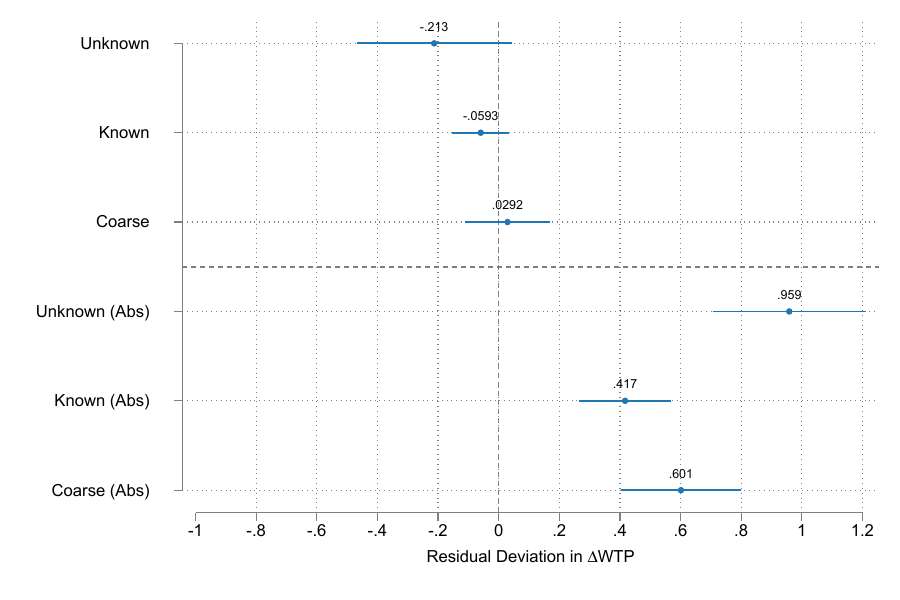}
 \caption{Residual Deviation in $\Delta WTP$ in Experiment 1}
 \label{fig:Desect_Exp1_abs}
  \begin{tabnotes}
  The figure shows the residual deviation after subtracting the Experiment 3 deviation from the Experiment 1 deviation. ``Unknown,'' ``Known,'' and ``Coarse'' represent $\mathcal{D}\Delta\text{WTP}_{1}-\mathcal{D}\Delta\text{WTP}_{3}$. ``Unknown (Abs),'' ``Known (Abs),'' and ``Coarse (Abs)'' represent $|\mathcal{D}\Delta\text{WTP}_{1}-\mathcal{D}\Delta\text{WTP}_{3}|$ in the corresponding treatments. Average levels are presented with 95\% confidence intervals, and standard errors are clustered at the subject level.
  \end{tabnotes}
\end{figure}

Figure \ref{fig:Desect_Exp1_abs} presents the average levels of $(\mathcal{D}\Delta\text{WTP}_{1}-\mathcal{D}\Delta\text{WTP}_{3})$ and $|\mathcal{D}\Delta\text{WTP}_{1}-\mathcal{D}\Delta\text{WTP}_{3}|$ across the three treatments. Although the signed averages of $(\mathcal{D}\Delta\text{WTP}_{1}-\mathcal{D}\Delta\text{WTP}_{3})$ are close to zero, especially \textit{Known} and \textit{Coarse}, the absolute magnitudes indicate that substantial residual deviations remain even after accounting for deviations in Experiment 3. These residual deviations are equivalent to 94\% of the baseline Experiment 1 deviation in Unknown, 74.2\% in Known, and 77.2\% in Coarse.\footnote{These proportions are calculated by dividing $|\mathcal{D}\Delta\text{WTP}_{1}-\mathcal{D}\Delta\text{WTP}_{3}|$ by $|\mathcal{D}\Delta\text{WTP}_{1}|$ in the corresponding treatments.}

Regression (3) in Table \ref{tab:deviation in DWTP_exp1vs3} confirms that the residual deviation is significantly larger in \textit{Unknown}, suggesting a poorer understanding of the strategic complementarity in learning. Although the residual deviation is larger in \textit{Coarse} than in \textit{Known}, this difference is not statistically significant ($p=0.135$).

The results on deviations in $\Delta\text{WTP}$ are summarized as follows.

\medskip{}

\noindent \textbf{Result 4 (Deviation in $\Delta\text{WTP}$)}

\noindent \textit{(1) Across all three treatments, subjects respond insufficiently to changes in others' learning decisions in Experiment 1. The absolute deviations are smallest in \textit{Known}, followed by \textit{Coarse}, and largest in \textit{Unknown}; the \textit{Known}--\textit{Unknown} difference is significant at 5\%, and the other pairwise differences at 10\%.}

\noindent \textit{(2) In \textit{Unknown}, these deviations are driven primarily by mistakes in mapping others' learning decisions into one's own admission chances.}

\noindent \textit{(3) In \textit{Known} and \textit{Coarse}, these deviations are also attributed to suboptimal responses to directly provided admission chances.}

\medskip{}

These results suggest that subjects in \textit{Unknown} face greater difficulty understanding the strategic environment, likely because they must reason under greater uncertainty about their priorities. Greater priority transparency in \textit{Known} and \textit{Coarse} reduces this strategic difficulty, but deviations in learning remain and can be partly explained by suboptimal responses to directly provided admission chances. This highlights the importance of addressing not only strategic mistakes, but also subjects' ability to use admission-chance information when designing policies to improve learning in such environments.

\subsection{School Decisions}

Next, we provide a brief overview of subjects' choices between Schools A and B in the admission procedure. We find that school decisions are largely consistent with Hypothesis 5: subjects choose School B after observing a realized School-B payoff of 40 and choose School A otherwise. There is no clear evidence of learning effects across rounds or differences across treatments. Among the 27 incorrect decisions, only one involves the subject choosing the school with the lower payoff after learning. In the remaining cases, subjects did not acquire information but chose School B over School A.

As noted above, school decisions are not the primary focus of the paper, and we intentionally simplify this part of the design. Nonetheless, these results validate the assumption that students prefer School A to School B when they do not acquire information, and they also provide evidence of subject attention and rationality during the experiment. The results on school choices are summarized as follows.

\medskip{}

\noindent \textbf{Result 5 (School Choices)}

\noindent \textit{Overall, 93.63\% of school decisions are correct, with no significant differences across treatments or rounds.}

\subsection{Beliefs}
Lastly, we examine subjects’ beliefs about other students’ learning decisions in Experiment 2. All students in \textit{Known} and Student 1/2 in \textit{Coarse} predict the learning decisions of each of the other two students, so a guess lies in \(\{0,1\}^2\). All students in \textit{Unknown} and Student 3 in \textit{Coarse} predict the number of other students who learn, so a guess lies in \(\{0,1,2\}\). To compare across treatments and roles, we convert guesses in the first case to the corresponding number in \(\{0,1,2\}\), and score a guess as correct if the predicted number of other students who learn is correct. The share of correct guesses is similar across treatments: 40\% in \textit{Known}, 44\% in \textit{Coarse}, and 41\% in \textit{Unknown}.

\medskip{}

\noindent \textbf{Result 6 (Beliefs)}

\noindent \textit{The proportion of correct beliefs about other students' learning decisions is similar across \textit{Known}, \textit{Coarse}, and \textit{Unknown}.}

% \medskip{}

\section{Conclusion}\label{sec:Conclusion}

This paper studies how priority transparency affects students' incentives to acquire costly information about their own preferences in school choice and college admissions. Our theoretical analysis shows that greater priority transparency does not always improve welfare. In some cases, partial disclosure can sustain stronger learning incentives by pooling students with different priority positions. Thus, although transparency is privately valuable, full disclosure need not be socially optimal.

The experimental results point in a different direction. In the laboratory, full knowledge of priorities yields the highest welfare, followed by partial knowledge, while no knowledge performs worst. This ranking contrasts with the theoretical prediction that partial disclosure can dominate full disclosure. The reason is behavioral: subjects make fewer mistakes when their priorities are disclosed more precisely. They respond more appropriately to their disclosed priorities, are more likely to treat learning decisions as strategic complements. By contrast, subjects in the no-knowledge treatment face greater difficulty understanding the strategic environment and mapping others' learning decisions into their own admission chances.

The experiment also shows that difficulties in learning decisions are not purely strategic. Even when priority transparency reduces strategic complexity, deviations in learning remain and are partly explained by suboptimal responses to directly provided admission chances. This suggests that disclosure is not sufficient on its own. The way admission chances are presented, and whether students are able to use them when deciding how much information to acquire, are themselves important policy concerns.

Our findings have two main implications. First, priority transparency can improve welfare by helping students make better learning decisions. This supports policies that disclose priorities or admission chances early enough to guide information acquisition. Second, designers should not assume that students will automatically use disclosed priorities or admission chances optimally. Transparency should be accompanied by clear guidance and decision support. More broadly, the paper highlights the importance of incorporating both information-acquisition incentives and behavioral responses when designing transparent matching mechanisms.

\section*{Acknowledgments}

We thank David Byrne, Yan Chen, Rustamdjan Hakimov, Dorothea Kübler, Simon Loertscher, Vincent Meisner, Andy Skrzypacz, Ao Wang, and Tom Wilkening for helpful comments and suggestions. We also thank participants at the Econometric Society Australasian Meeting, the Australian Conference of Economists, the ARC Behavioural Market Design Workshop, the Annual Australia New Zealand Workshop in Experimental Economics, the Greater Bay Area Market Design Workshop, the Deakin Economic Theory Workshop, the Conference on Economic Design, and the Matching Market Design: Strategy-Proofness and Beyond Workshop, as well as seminar participants at the University of Michigan, the Ohio State University, the WZB Berlin Social Science Center, Boston College, Brown University, the University of Sydney, the University of Queensland, the Chinese University of Hong Kong, Shenzhen, and the University of International Business and Economics, Beijing. We gratefully acknowledge financial support from MatchLab in the Faculty of Business and Economics at the University of Melbourne (Artemov and Pan), and from the Australian Research Council through Discovery Project DP160101350 (Artemov) and Discovery Early Career Researcher Award DE250100969 (Pan).

\bibliography{schoolChoice}
\bibliographystyle{ecta}

\newpage
\appendix
\def\thesection{Appendix \Alph{section}}
\section{A Continuum Model}\label{sec:general_model}
\def\Dpu{\Delta}

An important question is whether the key intuitions and theoretical predictions from our experimental setting continue to hold in a more general environment with more students and a broader range of parameters. In this section, we replicate the theoretical results from the main text in a large-market setting.

There is a unit mass of students, indexed by $i \in [0,1]$, who are assigned to one of two schools, A and B, through a sequential Random Serial Dictatorship (RSD) mechanism. For each student, the value of School B is independently drawn and initially unknown but can be learned at a cost. The timing is as follows: students first decide whether to pay the cost and learn the value of School B, and then RSD is run. Because the mechanism is strategy-proof, we assume that students choose between two schools according to their true preferences: when their turn to select a school arrives, they select the available school with the higher expected utility. Thus, student $i$'s strategy concerns only information acquisition. Let $e(i)=0$ if she decides not to learn the value of School B, and $e(i)=1$ if she does.

The value of School A is normalized to 1. Student $i$'s value of School B is $u_i(B)=u_B^H$ with probability $p$ and $u_i(B)=u_B^L$ with probability $(1-p)$, where $u_B^L<1<u_B^H$ and the expected value $u_B=pu_B^H+(1-p)u_B^L$
is normalized to $0$. Thus, before learning School B's realized value, every student prefers School A. After learning, student $i$ prefers School A to School B with probability $1-p$, when $u_i(B)=u_B^L$, and prefers School B to School A with probability $p$, when $u_i(B)=u_B^H$. The value of an outside option is below $u_B^L$, so both schools are acceptable to all students. Each student's cost of information acquisition is a common constant $c$.\footnote{The common cost assumption delivers a clean result, but a cost distribution with costs sufficiently concentrated in the $(\alpha\Dpu,\alpha p u_B^H)$ interval would lead to similar results. See the proof of Theorem~\ref{prop:Wcompare}, equations (\ref{Wcompare:ck}-\ref{Wcompare:uk}) for details on this interval.}

Each student independently draws a priority $s_i$ from a uniform distribution on $[0,1]$, with lower draws indicating higher priority at both schools.\footnote{With a uniform distribution, the assumption that lower draws imply higher priorities allows us to interpret priorities as admission chances in what follows.} Under sequential RSD, schools are non-strategic, and students choose sequentially from the available schools in increasing order of $s_i$. Because only the ordinal ranking of priorities matters, the uniform distribution is without loss of generality.

The capacity of School B is 1, so it can accommodate all students. The capacity of School A is $q^H$ with probability $\alpha$ and $q^L$ with probability $(1-\alpha)$, where $0<q^L<q^H<1-p$. Its expected capacity is therefore $\bar q = \alpha q^H + (1-\alpha) q^L$. The condition $q^H < 1 - p$ guarantees that School A remains oversubscribed even when capacity is $q^H$ and even if all students acquire information.

Capacity uncertainty may reflect genuine uncertainty about capacity, unmodeled uncertainty about the mass of high-priority students, or the finite-market uncertainty faced by students. Without such uncertainty, students in some treatments could perfectly predict her assignment, leaving no ``marginal'' students---a feature that is unlikely to be realistic in most settings.

\subsection{Preliminaries}

Because School B can accommodate all students, it is always available when a student chooses. We therefore focus on student $i$'s admission chance at School A, denoted by $a_i$: the probability that School A remains available when it is her turn to choose.

\begin{lm}\label{LemDiffUtil}
Suppose that student $i$'s admission chance at School A is $a_i$. Then the expected gain from acquiring information, excluding the information cost, is
\begin{align}
\mathbb{E}\left[u_i\big(e(i)=1\big)-u_i\big(e(i)=0\big)\right]
= a_i p(u_B^H-1),
\end{align}
where the expectation is taken over the realization of student $i$'s value of School B.
\end{lm}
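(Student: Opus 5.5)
The plan is to condition on whether School A is still available when it is student $i$'s turn in the sequential RSD, and to compare her payoff with and without learning in each event. A key preliminary observation is that her own learning decision $e(i)$ does not affect whether School A is available at her turn. Availability is determined by the capacity realization and by the choices of students with lower draws $s_j<s_i$, who move before her. Their choices depend only on their own information and their own values of School B, which are drawn independently of $u_i(B)$. So we may treat the event ``A available at $i$'s turn'' as having probability $a_i$, independent of $u_i(B)$ and of $e(i)$. (In the continuum, a single student has measure zero, so her choice cannot affect aggregate outcomes in any case.)

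First consider the event that School A is not available, which has probability $1-a_i$. Student $i$ is then assigned to School B whatever she knows. Her expected payoff from B is $u_B$ in both cases, so the gain from learning, excluding the cost, is zero. Next consider the event that School A is available, which has probability $a_i$. Without information, she compares the value $1$ of School A with the expected value $u_B=0$ of School B, chooses A, and obtains $1$. With information, she chooses the better school for her realized value:
\begin{itemize}
\item with probability $p$, $u_i(B)=u_B^H>1$, so she picks B and obtains $u_B^H$;
\item with probability $1-p$, $u_i(B)=u_B^L<1$, so she picks A and obtains $1$.
\end{itemize}
Her expected payoff is therefore $p\,u_B^H+(1-p)$, and the conditional gain from learning is $p\,u_B^H+(1-p)-1=p(u_B^H-1)$.

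Combining the two events by the law of total expectation, and using the independence noted at the start, gives the expected gain
\begin{align}
a_i\cdot p(u_B^H-1)+(1-a_i)\cdot 0 = a_i\,p(u_B^H-1).
\end{align}
The main point requiring care is that independence argument, namely that the availability of A at $i$'s turn is unaffected by her own learning decision and uncorrelated with her own draw $u_i(B)$. The rest is a direct computation.
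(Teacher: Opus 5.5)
Your proof is correct and is essentially the paper's argument: a direct two-case computation via the law of total expectation, except that you condition first on whether School A is available and then on $u_i(B)$, whereas the paper conditions first on the realized preference and then on availability before simplifying with $p u_B^H+(1-p)u_B^L=0$. The paper uses implicitly that availability at $i$'s turn is independent of $e(i)$ and of $u_i(B)$, by applying the same $a_i$ in both the informed and uninformed cases, so your explicit statement of this independence is a welcome addition.
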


\begin{proof}[Proof of Lemma \ref{LemDiffUtil}]
When student $i$ is informed and prefers School B, she chooses B and is assigned to it with probability 1. When she prefers School A, she chooses A and is assigned to it with probability $a_i$, deriving utility 1; otherwise, she is assigned to B. When she is uninformed, she chooses A whenever it is available. She is assigned to A with probability $a_i$, deriving utility 1, and to B with probability $1-a_i$, deriving expected utility 0. Thus,
\begin{align*}
\mathbb{E}\left[u_i\big(e(i)=1\big)-u_i\big(e(i)=0\big)\right]
&= \big[p u_B^H+(1-p)(a_i \times 1+(1-a_i)u_B^L)\big]-\big[a_i\times1 + (1-a_i)\times 0\big] \\
&= (1-a_i)\big[p u_B^H+(1-p)u_B^L\big]
+a_i\big(p u_B^H-p\big) \\
&= a_i p(u_B^H-1),
\end{align*}
where the last equality follows from the normalization
$p u_B^H+(1-p)u_B^L=0$.
\end{proof}

The following lemma facilitates equilibrium analysis and also mirrors our discussion of complementarity for the three-student model in the main text. 

\begin{lm}\label{lm:complementarity}
Fix a capacity realization. Student $i$'s admission chance at School A, $a_i$, is weakly increasing in the mass of higher-priority students who acquire information and is unaffected by the learning decisions of lower-priority students.
\end{lm}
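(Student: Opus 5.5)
Fix the capacity realization $q\in\{q^L,q^H\}$ and fix student $i$ with priority $s_i$. The plan is to write $a_i$ explicitly in terms of how many students ahead of $i$ take seats at School A. I then read off both parts of the lemma from that expression.

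\emph{Step 1: reduce $a_i$ to the demand for A from students ahead of $i$.} Under sequential RSD, School A is still available at $i$'s turn exactly when the mass of students with $s_j<s_i$ who choose A is less than $q$. Students choose sequentially by true (expected) preferences. So a student $j$ ahead of $i$ chooses A when it is available and either (a) she is uninformed, since $u_B=0<1$, or (b) she is informed and drew $u_B^L$. An informed student with $u_B^H$ chooses B. Nothing in this description refers to students with $s_j>s_i$, because they act only after $i$. This proves the second claim: $a_i$ does not depend on the learning decisions of lower-priority students.

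\emph{Step 2: compute the demand ahead of $i$.} Let $m(s)$ be the mass of students with priority below $s$ who acquire information. Let $s$ range over the priority interval ahead of $i$, of mass $s_i$ since priorities are uniform. By independence of the School-B draws, a law of large numbers gives demand for A from those ahead of $i$ (as long as A remains open) equal to
\begin{align*}
D(s_i)=s_i-p\,m(s_i).
\end{align*}
The informed students contribute $(1-p)m(s_i)$ and the uninformed contribute $s_i-m(s_i)$. Along the priority order, seats fill at most one-for-one, and demand is nondecreasing in $s$. Hence A is still available at $i$'s turn iff $D(s_i)<q$, up to the measure-zero marginal student. Given the capacity realization, $a_i=\mathbf{1}\{s_i-p\,m(s_i)<q\}$. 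If one wants $a_i$ as a probability that also averages over capacity, take $a_i=\alpha\mathbf{1}\{D<q^H\}+(1-\alpha)\mathbf{1}\{D<q^L\}$.

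\emph{Step 3: monotonicity.} Because $D(s_i)$ is strictly decreasing in $m(s_i)$, the indicator $\mathbf{1}\{D(s_i)<q\}$ is weakly increasing in $m(s_i)$. This is the mass of higher-priority students who acquire information, which proves the first claim. Averaging over capacity preserves weak monotonicity.

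\emph{Main obstacle.} The delicate part is making the continuum step rigorous. I need to justify that the realized fraction of informed high-priority students preferring B equals $p$. This requires the standard continuum law-of-large-numbers convention the paper implicitly adopts. I also need to handle ties and the boundary case $D(s_i)=q$, which affect only a null set of students. I would state the measurability and exact-LLN convention explicitly and argue that boundary cases are measure zero.
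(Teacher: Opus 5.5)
Your proposal is correct and follows essentially the same argument as the paper. School A is available at $i$'s turn iff the mass of higher-priority students who chose A is below capacity; uninformed students always choose A while informed ones do so with probability $1-p$; and lower-priority students move after $i$. Your cumulative-demand expression $D(s_i)=s_i-p\,m(s_i)$ and the exact law-of-large-numbers caveat make quantitative what the paper states verbally, without changing the route.
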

\begin{proof}
	For a given capacity realization, School A is available to student $i$ if and only if, when her turn arrives, the mass of higher-priority students who have chosen School A is below its capacity. When School A is available, an uninformed student always chooses School A, whereas an informed student chooses School A only when her value of School B is low, which occurs with probability $1-p$. Thus, when more higher-priority students acquire information, the mass of higher-priority students choosing School A weakly decreases. Student $i$'s admission chance at School A is therefore weakly increasing.

    Students with lower priorities than $i$ choose only after her and hence do not affect whether School A is available when her turn arrives. Therefore, their learning decisions do not affect $a_i$. By Lemma \ref{LemDiffUtil}, the gain from information acquisition, $a_i p(u_B^H-1)$, inherits the same monotonicity.
    \end{proof}

In the equilibrium analysis below, let $r$ denote the admission cutoff for School A: students with $s_i<r$ find School A available when their turn arrives, whereas those with $s_i>r$ do not. Let $r_z^L$ and $r_z^H$ denote the cutoffs under capacity realizations $q^L$ and $q^H$, respectively, where $z\in\{u,k,c\}$ indexes Treatments \textit{Unknown}, \textit{Known}, and \textit{Coarse}.

Under the uniform distribution, a student's probability of having priority at least as high as a cutoff $r$ is $r$. Hence, without knowing her priority, she faces an admission chance of $r$ at School A. We refer to students for whom School A is feasible under both capacity realizations as the ``top band,'' those for whom it is feasible only when capacity is high as the ``middle band,'' and those for whom it is never feasible as the ``bottom band.'' The admission cutoffs, and therefore this partition, are endogenous to the equilibrium strategy profile.

To simplify the notation that follows, let $\Dpu=p(u_B^H-1)$. By Lemma \ref{LemDiffUtil}, student $i$'s expected gain from acquiring information is $a_i\Dpu$. Thus, a top-band student, for whom $a_i=1$, gains $\Dpu$ from acquiring information. A middle-band student gains $\alpha\Dpu$, since School A is available only when capacity is high. Throughout, we assume that
\begin{align}
\alpha\Dpu<c<\Dpu. \tag{$\star$}\label{eq:star}
\end{align}
As discussed in \ref{sec:Known}, under (\ref{eq:star}), a student who knows she is in the top band acquires information, whereas one who knows she is in the middle band does not. We focus on this range because Treatments \textit{Known} and \textit{Coarse} yield different predictions only for these costs. When $c<\alpha\Dpu$, both top- and middle-band students acquire information under both treatments; when $c>\Dpu$, no student acquires information under either treatment.

\subsection{Unknown Priorities}

In this section, we derive results for the treatment in which students do not know their priorities when making their learning decisions. The following proposition characterizes the mass of students who acquire information in the maximum-information equilibrium. As discussed in the main text, our theoretical analysis focuses on the maximum-information equilibrium because it delivers the highest total welfare: information acquisition generates a positive externality and thus increases welfare whenever it is privately optimal. Our experimental design allows us to control for equilibrium selection in the data analysis.

\begin{prop}\label{ProUnkn}
	In the maximum-information equilibrium under Treatment \textit{Unknown}, every student acquires information if $c \leq \Dpu\,\bar q/(1-p)$ and no student acquires information otherwise.
\end{prop}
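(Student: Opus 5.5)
We treat Treatment \textit{Unknown} as a symmetric game. A student does not know her priority $s_i$, so every student's expected gain from learning depends only on the aggregate learning profile. We therefore first compute, for a profile in which a mass $m$ of students learns, the ex-ante admission chance of an individual student. We then show that the best response is monotone in $m$, so the maximum-information equilibrium is either everyone learning or nobody learning.

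\textbf{Step 1: cutoffs when everyone learns.} Suppose all students learn, and fix capacity $q\in\{q^L,q^H\}$. Students choose in increasing order of $s_i$. Among students with priority better than $r$, a mass $(1-p)r$ chooses School A (by a law of large numbers over independent draws). Hence the cutoff solves $(1-p)r=q$, giving $r=q/(1-p)$. This is less than 1 because $q^H<1-p$. A student who does not know her priority then has admission chance
\[
a=\alpha\frac{q^H}{1-p}+(1-\alpha)\frac{q^L}{1-p}=\frac{\bar q}{1-p}.
\]
This holds regardless of her own learning decision, since by Lemma~\ref{lm:complementarity} her own choice does not move the cutoff (she has measure zero). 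By Lemma~\ref{LemDiffUtil}, her gain from learning is $\Dpu\,\bar q/(1-p)$. So all students learning is an equilibrium if and only if $c\le \Dpu\,\bar q/(1-p)$; at equality we resolve the tie in favor of learning, consistent with selecting maximum information.

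\textbf{Step 2: this bounds every other profile.} Take any profile in which a mass $m\le 1$ of students learns. These learners are spread uniformly over priorities, since priorities are drawn independently of strategies. Among students with priority better than $r$, the mass choosing A is $(1-pm)r$, so the cutoff is $r=q/(1-pm)\le q/(1-p)$. Thus every student's admission chance, and hence her gain from learning, is at most $\Dpu\,\bar q/(1-p)$. This is the complementarity argument of Lemma~\ref{lm:complementarity}, applied in aggregate. It follows that if $c>\Dpu\,\bar q/(1-p)$, no student strictly wants to learn in any profile, so the unique equilibrium, and therefore the maximum-information one, has no learning.

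Conversely, when $c\le\Dpu\,\bar q/(1-p)$, the all-learn profile is an equilibrium by Step 1. It is trivially maximal in information. Because information has a positive externality (Lemma~\ref{lm:complementarity}), it also gives the highest welfare, so it is the selected equilibrium. This establishes the dichotomy in the statement.

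\textbf{Expected obstacle.} The main technical care lies in justifying the continuum cutoff computation. Specifically, we must show that the mass choosing A before cutoff $r$ is deterministic, equal to $(1-pm)r$, given independent School-B values and priorities. We also need to confirm that a single student's deviation does not change the cutoffs. Both follow from the standard continuum-law-of-large-numbers convention and measure-zero deviations, which we would state as a maintained assumption. The remaining calculations are routine.
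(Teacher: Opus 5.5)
Your proposal is correct and follows essentially the same route as the paper: both compute the cutoffs $q/(1-\gamma p)$ as a function of the mass $\gamma$ of learners, note that the resulting gain $\bar q\,\Delta/(1-\gamma p)$ is increasing in $\gamma$ and so is maximized at full learning, and read off the dichotomy by comparing $c$ with $\bar q\,\Delta/(1-p)$. Your extra remarks make explicit what the paper's market-clearing step assumes implicitly. These are that learners are spread uniformly over priorities, because strategies cannot depend on unknown priorities, and that a single student's deviation has measure zero and does not move the cutoffs.
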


\begin{proof}[Proof of Proposition \ref{ProUnkn}]

Let $\gamma$ denote the mass of students who acquire information. In the low-capacity state $q^L$, the mass of students choosing School A is
\begin{align}
r_u^L\big[\gamma(1-p)+(1-\gamma)\big]=r_u^L(1-\gamma p).
\end{align}
Market clearing therefore requires 
\begin{align}
r_u^L(1-\gamma p)=q^L,
\end{align}
or equivalently,
\begin{align}
r_u^L=\frac{q^L}{1-\gamma p}.
\end{align}
Similarly, in the high-capacity state $q^H$,
\begin{align}
r_u^H=\frac{q^H}{1-\gamma p}.
\end{align}
By assumption, $q^H<1-p$, so $r_u^L,r_u^H\in(0,1)$ for every $\gamma\in[0,1]$.

By Lemma \ref{LemDiffUtil}, the expected gain from acquiring information for every student is
	\begin{align}
		\alpha\frac{q^H}{1-\gamma p}\Dpu + (1-\alpha)\frac{q^L}{1-\gamma p}\Dpu = \frac{\bar q}{1-\gamma p}\Dpu.
	\end{align}
This gain is increasing in $\gamma$ and is therefore maximized at $\gamma=1$. Hence, if
\begin{align}
c\leq \frac{\bar q}{1-p}\Dpu,
\end{align}
then acquiring information is optimal when every student acquires information, so $\gamma=1$ is the maximum-information equilibrium. Conversely, if
\begin{align}
c>\frac{\bar q}{1-p}\Dpu,
\end{align}
then the gain from acquiring information is below $c$ for every $\gamma\in[0,1]$. Hence, no student acquires information.
\end{proof}

\subsection{Known Priorities}\label{sec:Known}

In this section, we derive results for the treatment in which students observe their precise priority $s_i$ before making their learning decisions. Their strategies may therefore depend on $s_i$.

\begin{prop}\label{prop:eq_k}
	Under Treatment \textit{Known}, there is a unique equilibrium in which students with $s_i \leq r_k^L$ acquire information, whereas those with $s_i > r_k^L$ do not. The cutoffs under $q^L$ and $q^H$ are
	\begin{align}
		r_k^L = \frac{q^L}{1-p},\label{eq:kn:rL}\\
		r_k^H = q^H - q^L + \frac{q^L}{1-p}.
	\end{align}
\end{prop}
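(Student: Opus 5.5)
The plan is to argue by ``unraveling'' down the priority order, using Lemma~\ref{lm:complementarity}: a student's admission chance depends only on the learning decisions of students with higher priority. Since priorities $s_i$ are known, we can treat the game as if students move in order of $s_i$. Fix any equilibrium strategy $e(\cdot)$ as a function of $s$. For each capacity realization $q\in\{q^L,q^H\}$, let $r(q)$ be the cutoff, that is, the smallest $s$ at which the mass of students with priority below $s$ choosing A reaches $q$. Then a student with $s_i<r(q)$ finds A available under $q$, and one with $s_i>r(q)$ does not. Her admission chance is therefore $a_i=1$ if $s_i<r^L$, $a_i=\alpha$ if $r^L<s_i<r^H$, and $a_i=0$ if $s_i>r^H$.

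By Lemma~\ref{LemDiffUtil} and assumption~(\ref{eq:star}), the three bands behave as follows. Top-band students gain $\Dpu>c$ and strictly prefer to learn. Middle-band students gain $\alpha\Dpu<c$ and strictly prefer not to learn. Bottom-band students gain $0<c$ and do not learn. So in any equilibrium, almost every student with $s_i<r^L$ learns and almost every student with $s_i>r^L$ does not.

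Next I compute the cutoffs implied by this behavior.
\begin{itemize}
\item On $[0,r^L]$, every student learns, so the mass choosing A is $(1-p)r^L$. Setting this equal to $q^L$ gives $r_k^L=q^L/(1-p)$. This lies in $(0,1)$ because $q^L<q^H<1-p$.
\item Under high capacity, students in $[0,r_k^L]$ again take mass $q^L$ of A. Beyond $r_k^L$, no one learns and everyone whose turn comes takes A, so the remaining capacity $q^H-q^L$ is filled by a mass $q^H-q^L$ of students. This gives $r_k^H=r_k^L+q^H-q^L$.
\item I also need $r_k^H\le 1$, or else to note that if it exceeds 1 the middle band is simply truncated, which does not matter for the statement. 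Since $q^H<1-p$, we get $r_k^H<1$ after a short check.
\end{itemize}

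Uniqueness is the step needing care, because the cutoff is endogenous: the behavior of each band depends on $r^L$, and $r^L$ depends on behavior. I will argue it by a fixed-point and monotonicity argument. Suppose an equilibrium has cutoff $\hat r^L$. By the band argument above, students below $\hat r^L$ learn and students above do not. Market clearing under $q^L$ then forces $(1-p)\hat r^L=q^L$, provided the cutoff lies within the learning region. That is exactly the case, since the A-demand up to $\hat r^L$ comes only from learners. Hence $\hat r^L=r_k^L$.

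To make this airtight, I would parametrize by $\hat r^L$ and show that the demand function $D(s)=\int_0^s\big[1-p\,e(t)\big]\,dt$ is strictly increasing. This gives a well-defined cutoff, and the self-consistency equation $(1-p)\hat r^L=q^L$ has a unique solution. Measure-zero deviations, including indifference exactly at $s_i=r^L$, do not affect the cutoffs, so the equilibrium is unique up to null sets. I will state this explicitly.
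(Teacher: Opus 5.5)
Your proposal is correct and follows the paper's proof: you classify the bands using Lemma~\ref{LemDiffUtil} and assumption (\ref{eq:star}), then derive $r_k^L$ and $r_k^H$ from market clearing under $q^L$ and $q^H$. The paper settles uniqueness in one line by appealing to strict inequalities. Your argument for it is more explicit: every equilibrium cutoff must solve $(1-p)\hat r^L=q^L$, because A-demand is strictly increasing in priority. Your checks that $r_k^L, r_k^H\in(0,1)$ and your treatment of measure-zero deviations fill small gaps the paper leaves implicit.
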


\begin{proof}
In \textit{Known}, a top-band student with $s_i\leq r_k^L$ knows that her admission chance is $a_i=1$. By Lemma \ref{LemDiffUtil}, her expected gain from acquiring information is $\Dpu$. Since $\Dpu>c$ under Assumption (\ref{eq:star}), she acquires information.

A middle-band student with $s_i\in(r_k^L,r_k^H]$ knows that School A is feasible only when capacity is high. Her admission chance is therefore 1 when capacity is $q^H$ and 0 when capacity is $q^L$. Her expected gain from acquiring information is
$\alpha\Dpu+(1-\alpha)\times 0=\alpha\Dpu<c,$
by Assumption (\ref{eq:star}). Hence, she does not acquire information.

A bottom-band student with $s_i>r_k^H$ knows that School A is never feasible. Her gain from acquiring information is therefore zero, and she does not acquire information. 

We find the cutoffs next. Note that, because students do not observe capacity realizations, they cannot condition their learning decisions on it.

Under $q^L$, School A is feasible only for top-band students, $s_i\leq r_k^L$. These students acquire information, and a fraction $1-p$ choose School A. Thus, the cutoff $r_k^L$ that clears the market satisfies
\begin{align}
    r_k^L(1-p)=q^L,
\end{align}
which gives
\begin{align}
    r_k^L=\frac{q^L}{1-p}.
\end{align}

Under $q^H$, top-band students occupy $q^L$ seats at School A, leaving $q^H-q^L$ seats for middle-band students with $s_i\in(r_k^L,r_k^H]$. These students do not acquire information and therefore all choose School A. Thus, market clearing requires
\begin{align}
r_k^H-r_k^L=q^H-q^L,
\end{align}
which gives
\begin{align}
r_k^H=q^H-q^L+\frac{q^L}{1-p}.
\end{align}
Under Assumption (\ref{eq:star}), all relevant inequalities are strict, so the equilibrium is unique.
\end{proof}

\subsection{Coarse Priorities}

Treatment \textit{Coarse} informs each student whether her priority lies in $[0,\tilde{r}_c^H]$ or in $(\tilde{r}_c^H,1]$, where $\tilde{r}_c^H = q^H/(1-p)$ denotes the equilibrium admission cutoff when capacity is $q^H$ and $\gamma=1$. Thus, when making her learning decision, each student knows whether she belongs to the top-or-middle pool or to the bottom band. A pooled student does not know whether her priority places her in the top band or the middle band and therefore evaluates her admission chance at School A by averaging over these possibilities. This is the continuum counterpart of the top-two-versus-third coarsening used in the experiment.

\begin{prop}\label{prop:eq_c}
	Under Treatment \textit{Coarse}, if $c \leq \left(\alpha + (1-\alpha)\frac{q^L}{q^H}\right)\Dpu$, then the maximum-information \footnote{There are multiple equilibria when the cost is in the range $(1-p)\frac{\bar q}{q^H}\Dpu\leq c\leq\frac{\bar q}{q^H}\Dpu$.} is such that students with $s_i \leq r_c^H$ acquire information, whereas those with $s_i > r_c^H$ do not. The cutoffs under $q^L$ and $q^H$ are
	\begin{align}
		r_c^L = \frac{q^L}{1-p}, \qquad r_c^H = \frac{q^H}{1-p}.\label{eq:eq_c_cutoffs}
	\end{align}
	If $c > \left(\alpha + (1-\alpha)\frac{q^L}{q^H}\right)\Dpu$, then no student acquires information.
\end{prop}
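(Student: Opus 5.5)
The plan is to follow the structure of the proofs of Propositions \ref{ProUnkn} and \ref{prop:eq_k}. I would first conjecture the candidate profile: every pooled student ($s_i\le\tilde r_c^H$) acquires information and every bottom-band student does not. Then I would compute the cutoffs this profile induces by market clearing. Finally, I would check individual optimality for both groups, using Lemma \ref{LemDiffUtil} to convert admission chances into gains.

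\textbf{Step 1 (cutoffs).} If every pooled student learns, then in state $q^L$ the mass choosing A among students with $s_i\le r$ is $r(1-p)$. Setting this equal to capacity gives $r_c^L=q^L/(1-p)$, and likewise $r_c^H=q^H/(1-p)=\tilde r_c^H$. Since $q^L<q^H<1-p$, both cutoffs lie in $(0,1)$ and satisfy $r_c^L<r_c^H$. Hence School A is exhausted exactly at the pool boundary in the high state, and bottom-band students never find A available.

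\textbf{Step 2 (incentives).} A bottom-band student has $a_i=0$, so her gain is zero and she does not learn. This holds in every profile, so it also shows that only pooled students can ever learn.

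A pooled student's priority is uniform on $[0,r_c^H]$. Conditional on the pool, her admission chance is therefore $\alpha\cdot 1+(1-\alpha)\,r_c^L/r_c^H=\alpha+(1-\alpha)q^L/q^H=\bar q/q^H$. By Lemma \ref{LemDiffUtil}, her gain is $(\alpha+(1-\alpha)q^L/q^H)\Dpu$. Learning is optimal, so the profile is an equilibrium, exactly when $c$ is at most this value.

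\textbf{Step 3 (maximality and the complementary case).} This is the main obstacle. I need to show that no equilibrium has more learning, and that when $c$ exceeds the threshold, no student learns in any equilibrium.

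For maximality: bottom-band students never learn, and in the candidate all pooled students learn, so no other equilibrium can have more learning.

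For the high-cost case, I would invoke Lemma \ref{lm:complementarity}. Fix any profile in which some mass $\gamma$ of pooled students learns. Each pooled student's admission chance is weakly increasing in the learning of higher-priority students. So her expected gain, averaged over her position in the pool, is bounded above by the gain when all pooled students learn, which is $(\bar q/q^H)\Dpu$. One subtlety: with fewer learners the cutoffs shrink, and the middle band may fall partly inside the pool. The monotonicity argument still bounds the pooled gain pointwise in $s_i$ and in each capacity state, so the averaged gain is bounded as well.

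Hence, if $c>(\bar q/q^H)\Dpu$, no pooled student gains enough to learn. The unique outcome is then no information acquisition. Under no learning the cutoffs are simply $q^L$ and $q^H$, which is consistent.

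I would also remark, matching the footnote, that the no-learning profile is an equilibrium whenever the pooled gain under no learning, $(1-p)(\bar q/q^H)\Dpu$, is at most $c$. This explains the multiplicity range, but it is not needed for the statement.
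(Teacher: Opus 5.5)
Your proof is correct. Its skeleton matches the paper's: the bottom band never learns, the pooled gain under full pooled learning equals $(\bar q/q^H)\Delta$, and that value is compared with $c$. You handle the key bounding step differently, though.

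The paper parametrizes an arbitrary profile by the fraction $\gamma$ of pooled students who learn. It solves market clearing for general $\gamma$, getting $r_c^L=q^L/(1-\gamma p)$ and $r_c^H=q^H/(1-\gamma p)$. This yields the explicit pooled gain $\frac{1-p}{1-\gamma p}\bigl(\alpha+(1-\alpha)\frac{q^L}{q^H}\bigr)\Delta$, and both the equilibrium check at $\gamma=1$ and the no-learning conclusion for high $c$ are read off this increasing function. You instead compute cutoffs only at full learning and bound the gain in every other profile by the pointwise monotonicity of Lemma~\ref{lm:complementarity}.

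Your route is shorter and slightly more general. It does not need learners to be spread uniformly over the pool, which the $\gamma$-parametrization takes for granted (naturally, since pooled students cannot condition on $s_i$). In a fixed capacity state, availability at $s_i$ depends only on the mass of learners in $[0,s_i)$, and that mass is maximal under full pooled learning. The paper's explicit formula, in turn, gives the whole benefit curve at once. That makes the complementarity visible and delivers the footnote's multiplicity range directly, which you recover by a separate computation at $\gamma=0$.

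One small point: your claim that bottom-band students have $a_i=0$ in every profile needs a line of justification. The paper notes that at least a share $1-p$ of any mass chooses A while it is available, so the pool $[0,\tilde r_c^H]$ alone demands at least $(1-p)\tilde r_c^H=q^H$ seats. Equivalently, it follows from your own monotonicity argument applied to the Step 1 profile.
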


\begin{proof}
First, students with $s_i>\tilde{r}_c^H$ are assigned to School B under either capacity realization, because the mass of pooled students choosing School A is at least $\tilde{r}_c^H(1-p)=q^H$. Their gain from acquiring information is therefore zero, so they do not acquire information.

Suppose that a fraction $\gamma$ of students with $s_i\leq\tilde{r}_c^H$ acquires information. 
Market clearing requires
\begin{align}
    r_c^L\big[\gamma(1-p)+(1-\gamma)\big]  &= q^L, \\
(r_c^H-r_c^L)\big[\gamma(1-p)+(1-\gamma)\big] &= q^H-q^L.
\end{align}
Hence,
\begin{align}
    r_c^L=\frac{q^L}{1-\gamma p},
    \qquad
    r_c^H=\frac{q^H}{1-\gamma p}.
\end{align}

Consider a student in the pooled group with $s_i\in[0,\tilde{r_c^H}]$. Conditional on belonging to this group, her priority is uniformly distributed on $[0,\tilde{r_c^H}]$. When capacity is $q^H$, School A is available to her with probability $(q^H/(1-\gamma p)/(q^H/(1-p) = (1-p)/(1-\gamma p)$. When capacity is $q^L$, it is available with probability
${r_c^L}/{\tilde{r}_c^H}={q^L(1-p)}/{q^H(1-\gamma p)}$.
By Lemma \ref{LemDiffUtil}, her expected gain from acquiring information is therefore
\begin{align}\label{eq:learningBenefit}
\frac{1-p}{1-\gamma p}\left(\alpha+(1-\alpha)\frac{q^L}{q^H}\right)\Dpu.
\end{align}

Suppose that
\begin{align}
c<\left(\alpha+(1-\alpha)\frac{q^L}{q^H}\right)\Dpu.
\end{align}
We show that $\gamma=1$ is an equilibrium by observing that the benefit of learning is then exactly $\left(\alpha+(1-\alpha)\frac{q^L}{q^H}\right)\Dpu$, which is above the cost. No student has incentive to deviate to not learning.

Suppose that  
\begin{align} c > \left(\alpha+(1-\alpha)\frac{q^L}{q^H}\right)\Dpu.
\end{align}
As the learning benefit in equation (\ref{eq:learningBenefit}) is at most $\left(\alpha+(1-\alpha)\frac{q^L}{q^H}\right)\Dpu,$ the cost is above the learning benefit and $\gamma = 0$. No student has incentive to deviate to learning.
\end{proof}

\subsection{Welfare Comparison}

In this section, we compare welfare across the three treatments. To further simplify the notation, we index students by their priorities and write $i = s_i$. 

Student $i$'s welfare can be written as the utility she would obtain if uninformed plus the gain from information acquisition given by Lemma \ref{LemDiffUtil}. Aggregating across students, the welfare is
\begin{align}
    W=\int_0^1 a_i\,di+\int_{\{i\,:\,e(i)=1\}}\left(a_i\Dpu-c\right)\,di,
\label{eq:welfare_id}
\end{align}
where $a_i$ is student $i$'s equilibrium admission chance at School A. We use $W_u$, $W_k$, and $W_c$ to denote welfare under Treatments \textit{Unknown}, \textit{Known}, and \textit{Coarse}, respectively.

\begin{thm}\label{prop:Wcompare}
Let
\begin{align*}
	\underline c = \min\left\{\frac{(q^H-q^L)\alpha p u_B^H}{1-p-q^L},\ \frac{\bar q}{1-p}\Dpu\right\},
	\qquad
	\overline c = \min\left\{\alpha p u_B^H,\ \frac{\bar q}{q^H}\Dpu\right\}.
\end{align*}
Then $\underline c<\overline c<\Dpu$, and, as the information cost $c$ rises within the range $\alpha\Dpu<c<\Dpu$ in (\ref{eq:star}), the welfare ranking is as follows.\footnote{The first interval, $\alpha\Dpu<c<\underline c$, in which \textit{Unknown} dominates \textit{Known}, may be empty for some parameter values. The other two intervals are always nonempty, as follows from $\underline c<\overline c<\Dpu$. In the final interval, $\overline c<c<\Dpu$, \textit{Coarse} and \textit{Unknown} coincide once neither treatment induces information acquisition.}
\begin{itemize}
	\item \textbf{\textit{Coarse} $\succ$ \textit{Unknown} $\succ$ \textit{Known}} \quad for $\alpha\Dpu<c<\underline c$;
	\item \textbf{\textit{Coarse} $\succ$ \textit{Known} $\succ$ \textit{Unknown}} \quad for $\underline c<c<\overline c$;
	\item \textbf{\textit{Known} $\succ$ \textit{Coarse} $\succeq$ \textit{Unknown}} \quad for $\overline c<c<\Dpu$.
\end{itemize}

\end{thm}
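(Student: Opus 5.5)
The plan is to compute closed-form welfare in each treatment from the identity (\ref{eq:welfare_id}), using the equilibrium cutoffs in Propositions \ref{ProUnkn}, \ref{prop:eq_k} and \ref{prop:eq_c}. Then compare the resulting piecewise-linear functions of $c$ on $\alpha\Dpu<c<\Dpu$. The first term of (\ref{eq:welfare_id}) is $\int_0^1 a_i\,di=\alpha r^H+(1-\alpha)r^L$, since $a_i$ is the probability that $i$ lies below the realized cutoff. The second term only involves the learners.

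\emph{Known.} By Proposition \ref{prop:eq_k}, the learners are $[0,r_k^L]$ with $a_i=1$. This gives
\[
W_k=\alpha\Big(q^H-q^L+\tfrac{q^L}{1-p}\Big)+(1-\alpha)\tfrac{q^L}{1-p}+\tfrac{q^L}{1-p}(\Dpu-c).
\]

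\emph{Coarse.} For $c\le \frac{\bar q}{q^H}\Dpu$, Proposition \ref{prop:eq_c} gives cutoffs $q/(1-p)$, and the whole pool $[0,q^H/(1-p)]$ learns. The pool's admission chances integrate to $\bar q/(1-p)$, so
\[
W_c=\tfrac{\bar q}{1-p}(1+\Dpu)-\tfrac{q^H}{1-p}c.
\]
Otherwise $W_c=\bar q$, because there is no learning and the cutoffs equal the capacities.

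\emph{Unknown.} By Proposition \ref{ProUnkn}, for $c\le\frac{\bar q}{1-p}\Dpu$ everyone learns, and
\[
W_u=\tfrac{\bar q}{1-p}(1+\Dpu)-c;
\]
otherwise $W_u=\bar q$.

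With these formulas, the comparisons reduce to linear inequalities in $c$.

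\emph{Coarse versus Unknown.} Note that $\frac{\bar q}{1-p}\le\frac{\bar q}{q^H}$ because $q^H<1-p$, so Coarse keeps learning whenever Unknown does. When both learn, $W_c-W_u=c\big(1-\frac{q^H}{1-p}\big)>0$. When only Coarse learns, $W_c\ge \bar q=W_u$ by individual rationality of learning together with the positive externality. When neither learns they coincide, which gives the ``$\succeq$'' in the last interval.

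\emph{Coarse versus Known.} Where Coarse learns, $W_c-W_k$ is linear in $c$ with slope $-\frac{q^H-q^L}{1-p}<0$. I expect it to vanish at $c=\alpha p u_B^H$: the extra learners are exactly the middle band, each contributing $\alpha\Dpu-c$, plus the seats they free up, which is where $u_B^H$ rather than $u_B^H-1$ appears. Beyond $\frac{\bar q}{q^H}\Dpu$ Coarse collapses to $\bar q<W_k$. Together these give the threshold $\overline c=\min\{\alpha p u_B^H,\frac{\bar q}{q^H}\Dpu\}$.

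\emph{Unknown versus Known.} In the learning regime, $W_u-W_k$ is linear with slope $-1+\frac{q^L}{1-p}<0$. Solving $W_u=W_k$ should give the first expression $\frac{(q^H-q^L)\alpha p u_B^H}{1-p-q^L}$ in $\underline c$. Past $\frac{\bar q}{1-p}\Dpu$ we have $W_u=\bar q<W_k$, which explains the min defining $\underline c$.

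Finally, I will verify $\underline c<\overline c<\Dpu$. The bound $\overline c<\Dpu$ is direct: $\alpha p u_B^H$ versus $p(u_B^H-1)$ need not compare, but $\frac{\bar q}{q^H}<1$ holds, so the min is below $\Dpu$. For $\underline c<\overline c$, I will compare the components pairwise: $\frac{\bar q}{1-p}<\frac{\bar q}{q^H}$, and $\frac{q^H-q^L}{1-p-q^L}<1$ because $q^H<1-p$.

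The main obstacle is the bookkeeping across the regimes created by the two min's. The sign of each pairwise difference must be checked in every sub-interval where the relevant treatment switches between learning and not learning. I would handle this by ordering the four breakpoints and checking each pairwise difference on each resulting sub-interval, using monotonicity (linear with negative slope, then constant $\bar q$) so that only endpoint signs need checking.
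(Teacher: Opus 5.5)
Your proposal is correct and follows essentially the same route as the paper's proof. Both compute closed-form welfare for each treatment from the welfare identity and reduce the comparisons to the linear differences $W_c-W_k=\frac{q^H-q^L}{1-p}(\alpha p u_B^H-c)$ and $W_u-W_k$, whose root is $\frac{(q^H-q^L)\alpha p u_B^H}{1-p-q^L}$. Each threshold is then the minimum of the crossing point and the point where learning ceases, after which welfare drops to $\bar q<W_k$, and $\underline c<\overline c$ follows from a componentwise comparison. Your explicit argument for $\overline c<\Delta$ via $\bar q/q^H<1$ fills in a step the paper leaves implicit.
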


\begin{proof}
We start with \textit{Known}. By Proposition \ref{prop:eq_k}, for every cost satisfying (\ref{eq:star}), top-band students, $[0,r_k^L]$, acquire information, whereas middle- and bottom-band students do not. The cutoffs are $r_k^L=q^L/(1-p)$ and $r_k^H=r_k^L+(q^H-q^L)$.

In the first term of \eqref{eq:welfare_id}, the admission chance is 1 on $[0,r_k^L]$, $\alpha$ on $(r_k^L,r_k^H]$, and 0 on $(r_k^H,1]$. Thus,
\[
\int_0^1 a_i\,di
=
r_k^L+\alpha(r_k^H-r_k^L)
=
\frac{q^L}{1-p}+\alpha(q^H-q^L).
\]
Only the top band acquires information, and these students have admission chance 1. The second term in \eqref{eq:welfare_id} is therefore
\[
\int_0^{r_k^L}(\Dpu-c)\,di
=
r_k^L(\Dpu-c).
\]
Thus,
\begin{align}
W_k
&= \frac{q^L}{1-p}+\alpha(q^H-q^L)
+ \frac{q^L}{1-p}(\Dpu-c) \\
&= \frac{q^L}{1-p}(1+\Dpu-c)
+ \alpha(q^H-q^L) \\
&= \bar q+\frac{q^L}{1-p}(p+\Dpu-c).
\end{align}

Whenever students acquire information under both \textit{Coarse} and \textit{Unknown}, the cutoffs are the same:
\[
r_{c,u}^L=\frac{q^L}{1-p},
\qquad
r_{c,u}^H=\frac{q^H}{1-p}.
\]
Thus, under both treatments,
\[
\int_0^1 a_i\,di
=
r_{c,u}^L+\alpha(r_{c,u}^H-r_{c,u}^L)
=
\frac{\bar q}{1-p}.
\]

Under \textit{Coarse}, only students with $i\leq r_{c,u}^H$ acquire information. Splitting the second term in \eqref{eq:welfare_id} by the top and middle bands---with $a_i=1$ on $[0,r_{c,u}^L]$, $a_i=\alpha$ on $(r_{c,u}^L,r_{c,u}^H]$, and $a_i=0$ on $(r_{c,u}^H,1]$---gives
\begin{align*}
W_c
&=
\frac{\bar q}{1-p}
+r_{c,u}^L(\Dpu-c)
+(r_{c,u}^H-r_{c,u}^L)(\alpha\Dpu-c) \\
&=
\frac{\bar q}{1-p}(1+\Dpu)
-\frac{q^H}{1-p}c.
\end{align*}
Under \textit{Unknown}, all students acquire information. Hence,
\begin{align*}
W_u
&=W_c-c(1-r_{c,u}^H) \\
&=\frac{\bar q}{1-p}(1+\Dpu)-c.
\end{align*}
The additional term $-c(1-r_{c,u}^H)$ is the information cost incurred by bottom-band students, who acquire information only under \textit{Unknown}.

Comparing $W_c$ and $W_u$ with $W_k$, and using $\Dpu+p=pu_B^H$, gives
\begin{align}
W_c-W_k
&=
\frac{q^H-q^L}{1-p}\bigl(\alpha p u_B^H-c\bigr),\label{Wcompare:ck} \\
W_u-W_k
&=
\frac{q^H-q^L}{1-p}\alpha p u_B^H
-c+\frac{q^L}{1-p}c.\label{Wcompare:uk}
\end{align}
These expressions apply when the relevant treatments induce information acquisition. Both differences decrease in $c$, and therefore
\begin{align*}
W_c>W_k
&\iff
c<\alpha p u_B^H, \\
W_u>W_k
&\iff
c<
\frac{(q^H-q^L)\alpha p u_B^H}{1-p-q^L},
\end{align*}
as long as the stated cost cutoffs are within the validity region---when students in the treatments being compared learn.

The treatments differ in the costs at which information acquisition ceases. Under \textit{Unknown}, students acquire information if and only if $c<\frac{\bar q}{1-p}\Dpu$, whereas under \textit{Coarse}, the pooled group acquires information if and only if $c<\frac{\bar q}{q^H}\Dpu$ (Propositions \ref{ProUnkn} and \ref{prop:eq_c}). Since $q^H<1-p$,
$\frac{\bar q}{1-p}<\frac{\bar q}{q^H}$,
so information acquisition ceases earlier under \textit{Unknown}.

Once a treatment no longer induces information acquisition, its welfare is $\bar q$. Since $\bar q<W_k$ for any costs in (\ref{eq:star}), combining the welfare comparisons above with the relevant learning thresholds yields
\begin{align*}
W_c>W_k
&\iff
c<\overline c
=
\min\left\{
\alpha p u_B^H,
\frac{\bar q}{q^H}\Dpu
\right\}, \\
W_u>W_k
&\iff
c<\underline c
=
\min\left\{
\frac{(q^H-q^L)\alpha p u_B^H}{1-p-q^L},
\frac{\bar q}{1-p}\Dpu
\right\}.
\end{align*}
Finally, $p<1-q^H$ implies
\[
\frac{(q^H-q^L)\alpha p u_B^H}{1-p-q^L}
<
\alpha p u_B^H.
\]
Together with $\frac{\bar q}{1-p}<\frac{\bar q}{q^H}$,
this yields $\underline c<\overline c<\Dpu$.

For the comparison between \textit{Coarse} and \textit{Unknown}, whenever both induce information acquisition, $W_c>W_u$ because \textit{Unknown} additionally induces bottom-band students to acquire information. \textit{Unknown} ceases to induce learning before \textit{Coarse}, after which $W_u=\bar q<W_c$ until \textit{Coarse} also ceases to induce learning. Hence, \textit{Coarse} strictly dominates \textit{Unknown} whenever
$c<\frac{\bar q}{q^H}\Dpu$,
and the two treatments coincide thereafter.
\end{proof}

The intuition is as follows. When information costs are low, but still within the range of (\ref{eq:star}), only the top band acquires information under \textit{Known}. Under \textit{Coarse}, both the top and middle bands acquire information, while under \textit{Unknown}, all students do. Learning is costly, and some of it is wasteful under \textit{Coarse} and \textit{Unknown}: under both treatments, learning by the middle band is wasted in the low-capacity state, and under \textit{Unknown}, learning by the bottom band is wasted in both capacity states. However, this waste is outweighed by the positive externality generated by learning among middle-band students.

As the learning cost rises, the cost of wasteful learning increases, while the positive externality remains constant. Eventually, this waste exceeds the externality generated by pooled students' learning. The other regime switch, as costs rise, occurs when learning ceases, first under \textit{Unknown} and then under \textit{Coarse}. Once learning stops, the positive externality that favored \textit{Coarse} and \textit{Unknown} disappears, ensuring that \textit{Known} dominates them.

For each comparison, the relevant threshold is determined by whichever occurs first: the point at which wasteful learning outweighs its externality, or the point at which learning ceases. This is why $\underline c$ and $\overline c$ are each defined as the minimum of two cost thresholds.

\section{Recombinant Estimation}\label{sec:Recombinant}
We use recombinant estimation to construct the welfare measures used to compare the three treatments. Specifically, we combine subjects into different groups of three and use their elicited strategies to compute the payoffs that would have occurred under such groupings.

\subsection{Experiment 2}
In Experiment 2, subjects know the costs of the other two students but do not receive any feedback about their group members' actions. Their decisions are therefore not affected by the composition of the group. We form all possible groups of three subjects, use their reported WTP at different costs to determine their learning decisions, and then calculate the resulting payoffs.

Under \textit{Unknown}, there are 63 subjects who all play the identical role, Student 1/2/3. This gives $63 \times 62 \times 61 = 238{,}266$ possible groupings. Under \textit{Known}, there are 63 subjects, with 21 subjects in each role: Student 1, Student 2, and Student 3. Since each subject plays a specific role, there are $21^3 = 9{,}261$ possible groupings. Under \textit{Coarse}, subjects playing the role of Student 1/2 can be recombined arbitrarily with one another, but not with subjects playing Student 3. With 54 subjects in total, this gives $36 \times 35 \times 18 = 22{,}680$ possible groupings.

Each group of three subjects faces 27 combinations of learning costs from $\{1.8, 2.6, 3\}$. To determine a subject's learning decision for a given cost triple, we compare her reported WTP for the relevant cost environment with her own cost in that triple. For example, suppose the cost triple is $(2.6,1.8,3)$ and the first subject's WTP when the other two students' costs are $(1.8,3)$ is 2. We then conclude that she does not acquire information, since $2<2.6$. Given the learning decisions of all three subjects, we calculate individual payoffs and then aggregate payoffs across all groups for each cost triple. Note that $(2.6,1.8,3)$ and $(1.8,2.6,3)$ correspond to the same cost triple under \textit{Unknown} and \textit{Coarse}, but not under \textit{Known}.

For each cost combination, we separately test the null hypotheses that the mean welfare differences are zero for all pairwise treatment comparisons. The reported $p$-values are from two-sided Wald \(z\)-tests using H\'{a}jek-projection variance estimators, which account for dependence among recombinant groups that share subjects, following the approach of \cite{abrevaya2008recombinant}.

\subsection{Experiment 1}
A particular challenge in Experiment 2 is the multiplicity of equilibria under \textit{Unknown} and \textit{Coarse}.\footnote{Although multiple equilibria arise under \textit{Known} for some cost combinations, those combinations are not offered to subjects in Experiment 2.} Experiment 1 avoids this difficulty by informing subjects of the learning decisions, rather than the learning costs, of the other students.

To present the welfare results of Experiment 1 in a way that is comparable to Experiment 2, where decisions are based on costs rather than actions, we map costs into actions. Because of equilibrium multiplicity, and because the same subject may submit different WTP reports in strategically equivalent situations, this mapping is not unique.\footnote{For example, a Student 1 subject under \textit{Known} may report different WTP values depending on the actions of others, even though Students 2 and 3 are not relevant to her and her payoff is therefore independent of their actions.} We therefore calculate two extremes, corresponding to the maximum-information and minimum-information beliefs about other students' learning decisions. The mapping for any intermediate belief lies between these two extremes.

For example, consider \textit{Unknown}. Suppose a subject reports $\text{WTP}(0)=2$ when neither of the other two students, played by robots, learns; $\text{WTP}(1)=3$ when one other student learns; and $\text{WTP}(2)=4$ when both other students learn. Suppose further that the subject's own cost is 1 in the cost triple. To determine which WTP value applies, we infer the other students' learning decisions from their costs. If the other two students' costs are $(1,1)$, it is rational for both of them to learn. We therefore use $\text{WTP}(2)=4$ for the subject and conclude that she learns. If their costs are $(3,3)$, it is rational for neither of them to learn, so we use $\text{WTP}(0)=2$. Finally, if their costs are $(2.6,2.6)$, there are two equilibria: both other students learn, or neither other student learns. The maximum-information calculation uses $\text{WTP}(2)=4$, whereas the minimum-information calculation uses $\text{WTP}(0)=2$.

For each triple of subjects, each cost combination $(x,y,z)$, and each mapping rule---maximum-information or minimum-information---there is a unique WTP triple, $\big(\text{WTP}(y,z),\allowbreak \text{WTP}(x,z),\allowbreak \text{WTP}(x,y)\big)$. We compare this WTP triple with the cost triple to determine subjects' learning decisions. For example, if $\text{WTP}(y,z)<x$, $\text{WTP}(x,z)<y$, and $\text{WTP}(x,y)>z$, then subjects 1 and 2 do not learn, while subject 3 learns.

The calculations for \textit{Unknown}, \textit{Known}, and \textit{Coarse} follow the procedure described above. Note that the maximum-information and minimum-information calculations may differ even when there is no equilibrium multiplicity, because subjects may report different WTP values across situations in which only students who are not relevant to them change their learning decisions.

\newpage
\section{Supplementary Tables and Figures}\label{sec:supp_figures}

\begin{table}[H]
\centering
\caption{Comparison of Total Student Payoff in Experiment 2}
\label{tab:welfare_exp2_costs}
\resizebox{\textwidth}{!}{
\begin{tabular}{ccc ccc cc cc cc}
\toprule
\multicolumn{3}{c}{Costs} & \multicolumn{3}{c}{Total payoff} & \multicolumn{2}{c}{$Known-Coarse$} & \multicolumn{2}{c}{$Known-Unknown$} & \multicolumn{2}{c}{$Coarse-Unknown$} \\
\cmidrule(lr){1-3} \cmidrule(lr){4-6} \cmidrule(lr){7-8} \cmidrule(lr){9-10} \cmidrule(lr){11-12}
S1 & S2 & S3 & \textit{Unknown} & \textit{Known} & \textit{Coarse} & Diff. & $p$-value & Diff. & $p$-value & Diff. & $p$-value \\
\midrule
1.8&1.8&1.8&82.88&84.77&83.53&1.23&0.233&\textcolor{Maroon}{1.89}&\textcolor{Maroon}{0.035}&0.65&0.443\\
1.8&1.8&2.6&81.85&84.33&83.53&0.80&0.433&\textcolor{Maroon}{2.48}&\textcolor{Maroon}{0.003}&\textcolor{Maroon}{1.68}&\textcolor{Maroon}{0.022}\\
1.8&1.8&3.0&81.57&84.78&83.53&1.25&0.206&\textcolor{Maroon}{3.20}&\textcolor{Maroon}{0.000}&\textcolor{Maroon}{1.95}&\textcolor{Maroon}{0.006}\\
1.8&2.6&1.8&81.85&84.03&81.87&\textcolor{Maroon}{2.16}&\textcolor{Maroon}{0.007}&\textcolor{Maroon}{2.18}&\textcolor{Maroon}{0.004}&0.02&0.964\\
1.8&2.6&2.6&81.03&84.28&81.97&\textcolor{Maroon}{2.31}&\textcolor{Maroon}{0.002}&\textcolor{Maroon}{3.25}&\textcolor{Maroon}{0.000}&\textcolor{Maroon}{0.94}&\textcolor{Maroon}{0.043}\\
1.8&2.6&3.0&80.84&83.95&82.06&\textcolor{Maroon}{1.89}&\textcolor{Maroon}{0.015}&\textcolor{Maroon}{3.11}&\textcolor{Maroon}{0.000}&\textcolor{Maroon}{1.22}&\textcolor{Maroon}{0.009}\\
1.8&3.0&1.8&81.57&84.01&81.60&\textcolor{Maroon}{2.41}&\textcolor{Maroon}{0.001}&\textcolor{Maroon}{2.43}&\textcolor{Maroon}{0.000}&0.02&0.960\\
1.8&3.0&2.6&80.84&83.67&81.62&\textcolor{Maroon}{2.05}&\textcolor{Maroon}{0.005}&\textcolor{Maroon}{2.83}&\textcolor{Maroon}{0.000}&\textcolor{Maroon}{0.79}&\textcolor{Maroon}{0.035}\\
1.8&3.0&3.0&80.60&83.62&81.86&\textcolor{Maroon}{1.75}&\textcolor{Maroon}{0.020}&\textcolor{Maroon}{3.01}&\textcolor{Maroon}{0.000}&\textcolor{Maroon}{1.26}&\textcolor{Maroon}{0.001}\\
2.6&1.8&1.8&81.85&82.91&81.87&1.04&0.223&1.06&0.190&0.02&0.964\\
2.6&1.8&2.6&81.03&82.87&81.97&0.91&0.293&\textcolor{Maroon}{1.85}&\textcolor{Maroon}{0.018}&\textcolor{Maroon}{0.94}&\textcolor{Maroon}{0.043}\\
2.6&1.8&3.0&80.84&82.50&82.06&0.44&0.594&\textcolor{Maroon}{1.66}&\textcolor{Maroon}{0.022}&\textcolor{Maroon}{1.22}&\textcolor{Maroon}{0.009}\\
2.6&2.6&1.8&81.03&82.55&80.81&\textcolor{Maroon}{1.74}&\textcolor{Maroon}{0.013}&\textcolor{Maroon}{1.53}&\textcolor{Maroon}{0.019}&-0.21&0.579\\
2.6&2.6&2.6&80.39&82.54&81.02&1.51&0.052&\textcolor{Maroon}{2.15}&\textcolor{Maroon}{0.002}&0.64&0.129\\
2.6&2.6&3.0&80.12&82.60&81.03&\textcolor{Maroon}{1.57}&\textcolor{Maroon}{0.023}&\textcolor{Maroon}{2.47}&\textcolor{Maroon}{0.000}&\textcolor{Maroon}{0.91}&\textcolor{Maroon}{0.009}\\
2.6&3.0&1.8&80.84&82.09&80.42&\textcolor{Maroon}{1.68}&\textcolor{Maroon}{0.014}&\textcolor{Maroon}{1.26}&\textcolor{Maroon}{0.049}&-0.42&0.204\\
2.6&3.0&2.6&80.12&82.32&80.71&\textcolor{Maroon}{1.61}&\textcolor{Maroon}{0.015}&\textcolor{Maroon}{2.20}&\textcolor{Maroon}{0.000}&\textcolor{Maroon}{0.59}&\textcolor{Maroon}{0.016}\\
2.6&3.0&3.0&80.00&82.30&80.87&\textcolor{Maroon}{1.43}&\textcolor{Maroon}{0.033}&\textcolor{Maroon}{2.30}&\textcolor{Maroon}{0.000}&\textcolor{Maroon}{0.87}&\textcolor{Maroon}{0.002}\\
3.0&1.8&1.8&81.57&81.85&81.60&0.25&0.758&0.28&0.721&0.02&0.960\\
3.0&1.8&2.6&80.84&81.80&81.62&0.17&0.819&0.96&0.168&\textcolor{Maroon}{0.79}&\textcolor{Maroon}{0.035}\\
3.0&1.8&3.0&80.60&81.75&81.86&-0.11&0.896&1.15&0.127&\textcolor{Maroon}{1.26}&\textcolor{Maroon}{0.001}\\
3.0&2.6&1.8&80.84&81.67&80.42&1.25&0.058&0.83&0.178&-0.42&0.204\\
3.0&2.6&2.6&80.12&81.43&80.71&0.72&0.252&\textcolor{Maroon}{1.30}&\textcolor{Maroon}{0.027}&\textcolor{Maroon}{0.59}&\textcolor{Maroon}{0.016}\\
3.0&2.6&3.0&80.00&81.44&80.87&0.57&0.443&\textcolor{Maroon}{1.44}&\textcolor{Maroon}{0.038}&\textcolor{Maroon}{0.87}&\textcolor{Maroon}{0.002}\\
3.0&3.0&1.8&80.60&81.72&80.49&\textcolor{Maroon}{1.23}&\textcolor{Maroon}{0.049}&1.12&0.055&-0.11&0.690\\
3.0&3.0&2.6&80.00&81.66&80.36&\textcolor{Maroon}{1.31}&\textcolor{Maroon}{0.036}&\textcolor{Maroon}{1.66}&\textcolor{Maroon}{0.004}&0.36&0.159\\
3.0&3.0&3.0&79.87&81.39&80.79&0.61&0.397&\textcolor{Maroon}{1.52}&\textcolor{Maroon}{0.021}&\textcolor{Maroon}{0.92}&\textcolor{Maroon}{0.001}\\
\bottomrule
\end{tabular}
}
\begin{tabnotes}
The table reports total student payoff by treatment for each cost combination in Experiment 2. The first three columns report the information costs for students with priorities 1, 2, and 3, respectively. The columns labeled ``Diff.'' report pairwise differences in average total student payoff between treatments, with the corresponding $p$-values reported in the adjacent columns. Standard errors are adjusted using the method of \citet{abrevaya2008recombinant}. Entries highlighted in \textcolor{Maroon}{Maroon} indicate differences that are statistically significant at the 5\% level.
\end{tabnotes}
\end{table}

\begin{table}[htbp]
\centering
\caption{Comparison of Total Student Payoff in Experiment 1 (Experiment-2 Costs)}
\label{tab:welfare_exp1_exp2cost_max}
\resizebox{\textwidth}{!}{
\begin{tabular}{ccc ccc cc cc cc}
\toprule
\multicolumn{3}{c}{Costs} & \multicolumn{3}{c}{Total payoff} & \multicolumn{2}{c}{$Known-Coarse$} & \multicolumn{2}{c}{$Known-Unknown$} & \multicolumn{2}{c}{$Coarse-Unknown$} \\
\cmidrule(lr){1-3} \cmidrule(lr){4-6} \cmidrule(lr){7-8} \cmidrule(lr){9-10} \cmidrule(lr){11-12}
S1 & S2 & S3 & \textit{Unknown} & \textit{Known} & \textit{Coarse} & Diff. & $p$-value & Diff. & $p$-value & Diff. & $p$-value \\
\midrule
1.8&1.8&1.8&82.63&85.41&84.10&1.31&0.208&\textcolor{Maroon}{2.77}&\textcolor{Maroon}{0.002}&1.47&0.081\\
1.8&1.8&2.6&81.85&85.29&84.24&1.05&0.307&\textcolor{Maroon}{3.44}&\textcolor{Maroon}{0.000}&\textcolor{Maroon}{2.39}&\textcolor{Maroon}{0.002}\\
1.8&1.8&3.0&81.79&85.24&84.24&0.99&0.336&\textcolor{Maroon}{3.44}&\textcolor{Maroon}{0.000}&\textcolor{Maroon}{2.45}&\textcolor{Maroon}{0.001}\\
1.8&2.6&1.8&81.85&84.61&82.44&\textcolor{Maroon}{2.17}&\textcolor{Maroon}{0.007}&\textcolor{Maroon}{2.75}&\textcolor{Maroon}{0.000}&0.59&0.327\\
1.8&2.6&2.6&81.14&84.49&82.61&\textcolor{Maroon}{1.89}&\textcolor{Maroon}{0.019}&\textcolor{Maroon}{3.35}&\textcolor{Maroon}{0.000}&\textcolor{Maroon}{1.47}&\textcolor{Maroon}{0.006}\\
1.8&2.6&3.0&80.83&84.43&82.61&\textcolor{Maroon}{1.83}&\textcolor{Maroon}{0.025}&\textcolor{Maroon}{3.60}&\textcolor{Maroon}{0.000}&\textcolor{Maroon}{1.77}&\textcolor{Maroon}{0.000}\\
1.8&3.0&1.8&81.79&84.43&82.02&\textcolor{Maroon}{2.41}&\textcolor{Maroon}{0.001}&\textcolor{Maroon}{2.63}&\textcolor{Maroon}{0.000}&0.22&0.660\\
1.8&3.0&2.6&80.83&84.31&82.19&\textcolor{Maroon}{2.12}&\textcolor{Maroon}{0.005}&\textcolor{Maroon}{3.48}&\textcolor{Maroon}{0.000}&\textcolor{Maroon}{1.36}&\textcolor{Maroon}{0.002}\\
1.8&3.0&3.0&80.58&84.25&82.19&\textcolor{Maroon}{2.06}&\textcolor{Maroon}{0.007}&\textcolor{Maroon}{3.68}&\textcolor{Maroon}{0.000}&\textcolor{Maroon}{1.61}&\textcolor{Maroon}{0.000}\\
2.6&1.8&1.8&81.85&83.26&82.44&0.82&0.401&1.41&0.122&0.59&0.327\\
2.6&1.8&2.6&81.14&83.14&82.61&0.54&0.580&\textcolor{Maroon}{2.01}&\textcolor{Maroon}{0.024}&\textcolor{Maroon}{1.47}&\textcolor{Maroon}{0.006}\\
2.6&1.8&3.0&80.83&83.09&82.61&0.48&0.623&\textcolor{Maroon}{2.25}&\textcolor{Maroon}{0.010}&\textcolor{Maroon}{1.77}&\textcolor{Maroon}{0.000}\\
2.6&2.6&1.8&81.14&82.70&81.11&1.59&0.051&\textcolor{Maroon}{1.57}&\textcolor{Maroon}{0.036}&-0.02&0.959\\
2.6&2.6&2.6&80.48&82.59&81.29&1.30&0.112&\textcolor{Maroon}{2.11}&\textcolor{Maroon}{0.004}&0.82&0.055\\
2.6&2.6&3.0&80.29&82.53&81.29&1.24&0.132&\textcolor{Maroon}{2.25}&\textcolor{Maroon}{0.002}&\textcolor{Maroon}{1.01}&\textcolor{Maroon}{0.012}\\
2.6&3.0&1.8&80.83&82.56&80.79&\textcolor{Maroon}{1.77}&\textcolor{Maroon}{0.021}&\textcolor{Maroon}{1.73}&\textcolor{Maroon}{0.015}&-0.04&0.914\\
2.6&3.0&2.6&80.29&82.45&80.98&1.46&0.056&\textcolor{Maroon}{2.16}&\textcolor{Maroon}{0.002}&\textcolor{Maroon}{0.70}&\textcolor{Maroon}{0.034}\\
2.6&3.0&3.0&80.09&82.39&80.98&1.41&0.069&\textcolor{Maroon}{2.30}&\textcolor{Maroon}{0.001}&\textcolor{Maroon}{0.89}&\textcolor{Maroon}{0.005}\\
3.0&1.8&1.8&81.79&82.30&82.02&0.28&0.763&0.50&0.565&0.22&0.660\\
3.0&1.8&2.6&80.83&82.18&82.19&-0.01&0.991&1.35&0.113&\textcolor{Maroon}{1.36}&\textcolor{Maroon}{0.002}\\
3.0&1.8&3.0&80.58&82.13&82.19&-0.07&0.942&1.55&0.069&\textcolor{Maroon}{1.61}&\textcolor{Maroon}{0.000}\\
3.0&2.6&1.8&80.83&81.87&80.79&1.07&0.159&1.03&0.145&-0.04&0.914\\
3.0&2.6&2.6&80.29&81.75&80.98&0.77&0.311&\textcolor{Maroon}{1.47}&\textcolor{Maroon}{0.039}&\textcolor{Maroon}{0.70}&\textcolor{Maroon}{0.034}\\
3.0&2.6&3.0&80.09&81.70&80.98&0.72&0.353&\textcolor{Maroon}{1.61}&\textcolor{Maroon}{0.024}&\textcolor{Maroon}{0.89}&\textcolor{Maroon}{0.005}\\
3.0&3.0&1.8&80.58&81.74&80.49&1.25&0.088&1.16&0.090&-0.08&0.793\\
3.0&3.0&2.6&80.09&81.63&80.68&0.94&0.197&\textcolor{Maroon}{1.54}&\textcolor{Maroon}{0.026}&\textcolor{Maroon}{0.60}&\textcolor{Maroon}{0.027}\\
3.0&3.0&3.0&79.91&81.57&80.68&0.89&0.231&\textcolor{Maroon}{1.66}&\textcolor{Maroon}{0.017}&\textcolor{Maroon}{0.78}&\textcolor{Maroon}{0.003}\\
\bottomrule
\end{tabular}
}
\begin{tabnotes}
The table reports total student payoff by treatment in Experiment 1, under the maximum information equilibrium, for each Experiment-2 cost combination. The first three columns report the information costs for students with priorities 1, 2, and 3, respectively. The columns labeled ``Diff.'' report pairwise differences in average total student payoff between treatments, with the corresponding $p$-values reported in the adjacent columns. Standard errors are adjusted using the method of \citet{abrevaya2008recombinant}. Entries highlighted in \textcolor{Maroon}{Maroon} indicate differences that are statistically significant at the 5\% level.
\end{tabnotes}
\end{table}

\begin{figure}[htbp]
  \centering

  \begin{subfigure}{0.75\textwidth}
    \centering
    \includegraphics[width=\linewidth]{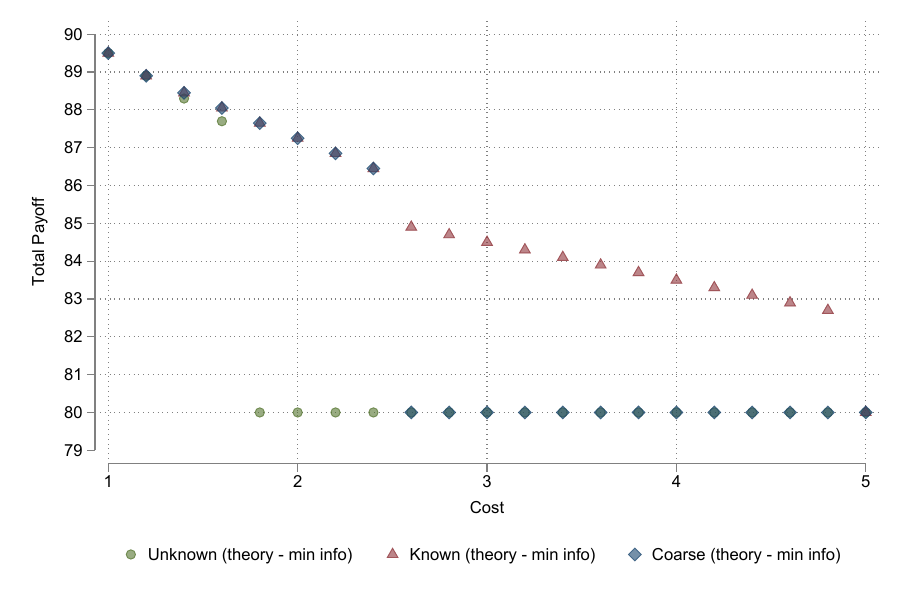}
    \caption{Theory}
    \label{fig:Welfare_theory_sym_min}
  \end{subfigure}

  \vspace{0.5em}

  \begin{subfigure}{0.75\textwidth}
    \centering
    \includegraphics[width=\linewidth]{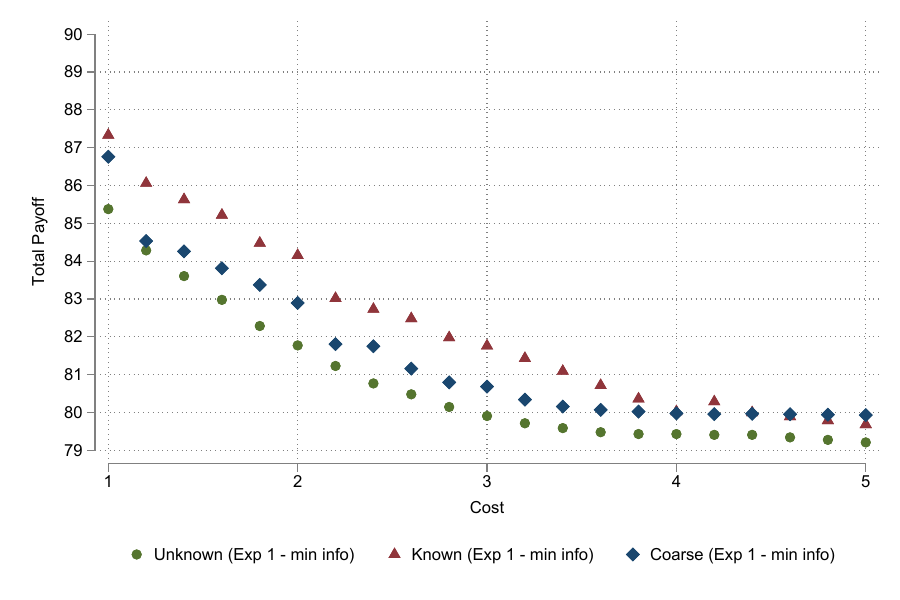}
    \caption{Experiment 1}
    \label{fig:Welfare_exp1_sym_min}
  \end{subfigure}

  \caption{Total Student Payoff: Symmetric Costs, Minimum Information}
  \label{fig:Welfare_sym_min}

  \begin{tabnotes}
      The figures show total student payoff under each treatment for all symmetric costs in $[1,5]$ on a 0.2 grid under the minimum-information equilibrium. Panel (a) presents the theoretical prediction under minimum-information equilibrium selection, and Panel (b) presents the results from Experiment 1, fixing beliefs at the minimum-information equilibrium.
  \end{tabnotes}
\end{figure}

\begin{figure}[htbp]
 \centering
 \includegraphics[width=1\linewidth]{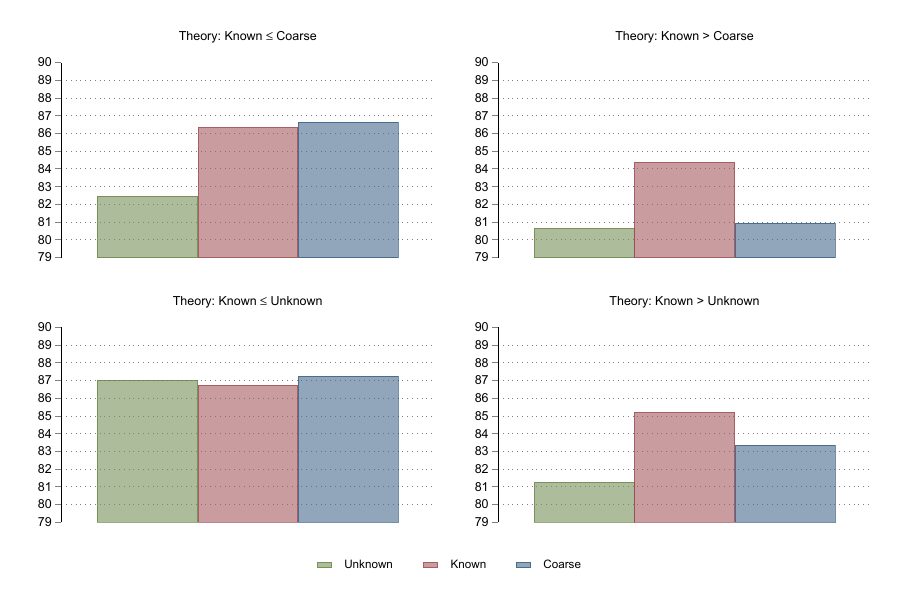}
 \caption{Total Student Payoff: Theory (All Costs)}
 \label{fig:Welfare_theory_all}
 \begin{tabnotes}
     The figure reports theoretically predicted total student payoff for all cost combinations in $[1,5]^3$ on a 0.2 grid under the maximum-information equilibrium. Cost combinations are grouped according to the theoretically predicted treatment ranking.
 \end{tabnotes}
\end{figure}

\begin{figure}[h]
 \centering
 \includegraphics[width=0.8\linewidth]{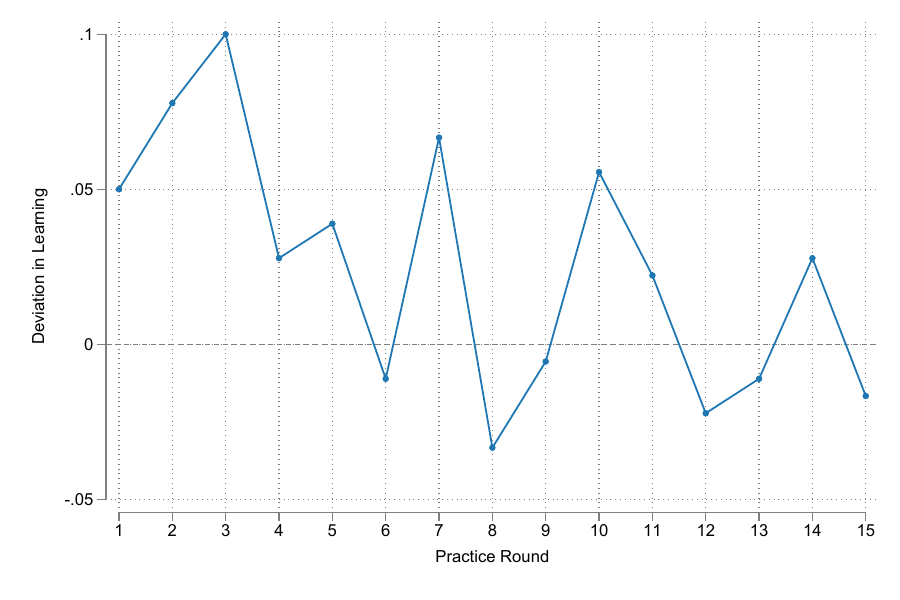}
 \caption{Deviation in Learning Decisions (Practice Rounds)}
 \label{fig:deviation_practice}
    \begin{tabnotes}
        The figure displays the average deviation of subjects' learning decisions from the theoretical predictions during the practice rounds. We code ``Learn'' as 1 and ``Not Learn'' as 0, and define deviation as the experimental decision minus the theoretical prediction. Thus, zero indicates correct learning, positive values indicate over-search on average, and negative values indicate under-search on average.
    \end{tabnotes}
\end{figure}

\begin{table}[h]
\centering
\caption{Determinants of WTP by Treatments}
\def\sym#1{\ifmmode^{#1}\else\(^{#1}\)\fi}
\begin{tabular*}{1\hsize}{@{\hskip\tabcolsep\extracolsep\fill}l*{3}{cc}}
\hline\hline
            &\multicolumn{6}{c}{WTP}\\
\cline{2-7}
            &\multicolumn{2}{c}{(1)}           &\multicolumn{2}{c}{(2)}           &\multicolumn{2}{c}{(3)}           \\
            &\multicolumn{2}{c}{(Unknown)}&\multicolumn{2}{c}{(Known)}&\multicolumn{2}{c}{(Coarse)}\\ 
\hline\hline
%switch      &                     &            &                     &            &                     &            \\
\textit{Student 1}       &                     &            &       2.366\sym{***}&     (0.416)&                     &            \\
\textit{Student 2}        &                     &            &       1.025\sym{***}&     (0.306)&                     &            \\
\textit{Student 1/2}      &                     &            &                     &            &       1.346\sym{***}&     (0.241)\\
\textit{Curiosity}   &       0.123         &     (0.148)&       0.829\sym{***}&     (0.159)&       0.280         &     (0.229)\\
\textit{Female}      &      -0.621\sym{*}         &     (0.336)&      -0.464         &     (0.302)&      -0.132         &     (0.363)\\
\textit{Graduate}        &     -0.0717         &     (0.353)&       1.022         &     (0.642)&     -0.0544         &     (0.465)\\
\textit{Age}          &      0.0309         &    (0.0219)&     -0.0998         &    (0.0725)&      0.0178         &    (0.0314)\\
Constant      &       1.423\sym{**}  &     (0.565)&       2.256         &     (1.396)&    -0.00274         &     (0.646)\\
\hline
\(N\)       &         189         &            &         252         &            &         198         &            \\
\hline\hline
\end{tabular*}
\begin{tabnotes}
    The table reports coefficients from regressions of subjects' WTP for information about School B on subjects' priority indicators, curiosity measured in Experiment 3, and demographic characteristics. \textit{Student 1} and \textit{Student 2} are dummy variables that equal one for the respective priorities in Treatment \textit{Known} and zero otherwise. \textit{Student 1/2} is a dummy that equals one for top two students in Treatment \textit{Coarse} and zero otherwise. \textit{Curiosity} is the WTP for information when the admission chance is zero in Experiment 3. \textit{Female} is a dummy that equals one for female subjects and zero otherwise. \textit{Graduate} is a dummy that equals one for graduate students and zero for undergraduate students. \textit{Age} represents the age of subjects, ranging from 18 to 52 years old. Tobit models are adopted with censoring from below and above. Standard errors are clustered at the level of individual subjects. \sym{*} \(p<0.1\), \sym{**} \(p<0.05\), \sym{***} \(p<0.01\)
\end{tabnotes}
\label{tab:regression WTP_treatment}
\end{table}

\end{document}